\newtheorem{lem}{Lemma}
\newtheorem{rem}{Remark}
\newtheorem{theo}{Theorem}
\newtheorem{cor}{Corollary}
\newtheorem{fact}{Fact}
\newtheorem{ex}{Example}
\newcommand{\algmargin}{\the\ALG@thistlm}
\newlength{\forwidth}
\algnewcommand{\parState}[1]{\State
  \parbox[t]{\dimexpr\linewidth-\algmargin}{\strut #1\strut}}
\newlength{\ifwidth}
\begin{document}

\title{From Cages to Trapping Sets and Codewords: A Technique to Derive Tight Upper Bounds on the Minimum Size of Trapping Sets and Minimum Distance of LDPC Codes}
\author{Ali Dehghan, and Amir H. Banihashemi,\IEEEmembership{ Senior Member, IEEE}}

\maketitle

%\IEEEpeerreviewmaketitle

\begin{abstract}
Cages, defined as regular graphs with minimum number of nodes for a given girth, are well-studied in graph theory. Trapping sets are graphical structures responsible for error floor of low-density parity-check (LDPC) codes, and are well investigated in coding theory. In this paper, we make connections between cages and trapping sets. In particular, starting from a cage (or a modified cage), we construct a trapping set in multiple steps. Based on the connection between cages and trapping sets, we then use the available results in graph theory on cages and derive tight upper bounds on the size of the smallest trapping sets for variable-regular LDPC codes with a given variable degree and girth. The derived upper bounds in many cases meet the best known lower bounds and thus provide the actual size of the smallest trapping sets. Considering that non-zero codewords are a special case of trapping sets, we also derive tight upper bounds on the minimum weight of such codewords, i.e., the minimum distance, of variable-regular LDPC codes as a function of variable degree and girth.

\begin{flushleft}
\noindent {\bf Index Terms:}
Low-density parity-check (LDPC) codes, error floor, trapping sets,  elementary trapping sets (ETSs), leafless elementary trapping sets (LETSs), elementary trapping sets with leaf (ETSLs), non-elementary trapping sets (NETSs), minimum distance, upper bounds.
\end{flushleft}

\end{abstract}

\section{introduction}

Trapping sets are combinatorial objects known to be responsible for the error floor of low-density parity-check (LDPC) codes. They can be viewed as subgraphs of the code's Tanner graph induced by a certain number $a$ of variable nodes. If the number of odd-degree check nodes in the subgraph is $b$, the trapping set is said to belong to the $(a,b)$ class.
Among trapping sets, those with only degree-$1$ and degree-$2$ check nodes are known to be the most harmful. Such trapping sets are called {\em elementary}. Within the category of elementary trapping sets (ETSs), those in which each variable node is connected to at least two even-degree check nodes are referred to as {\em leafless}. Leafless ETSs (LETSs) appear often as the most dominating in the error floor. Elementary trapping sets and LETSs have been the subject of extensive research~\cite{MSW, ZR, asvadi2011lowering, MR2991821, MR3252383, hashemi2015characterization2, hashemi2015new, Y-Arxiv, hashemilower, hashemi2018}. The majority of existing literature is focused on the characterization and efficient search of trapping sets, see, e.g.,~\cite{hashemi2015new}, and the references therein.

In general, for a given $b$, trapping sets with smaller $a$ are deemed to be more harmful. Recently, in~\cite{hashemilower}, lower bounds on the smallest size of ETSs and non-elementary trapping sets (NETSs) of variable-regular LDPC codes in terms of variable degree $d_v$ and girth $g$ were derived. It was shown in~\cite{hashemilower}, that the minimum size of NETSs is, in general, larger than that of ETSs for a given $b$ value. This provided a theoretical justification for why NETSs are in general less harmful than ETSs.

An $(a,b)$ trapping set with $b=0$ is a non-zero codeword of weight $a$. Based on this view, most recently, in~\cite{hashemi2018}, non-zero codewords were partitioned into two categories of elementary and non-elementary, and lower bounds on the minimum weight of codewords in each category were derived. These bounds along with an efficient search algorithm~\cite{hashemi2015new}, \cite{Y-Arxiv} were then used in~\cite{hashemi2018} to derive tight lower and upper bounds on the minimum distance of specific variable-regular LDPC codes. In particular, the upper bound in~\cite{hashemi2018} was derived by searching for an elementary codeword within the code's Tanner graph. In related work, Vasic and Milenkovic \cite{MR2094873} derived bounds on the minimum distance of LDPC codes constructed based on balanced incomplete block designs (BIBDs). In~\cite{rr1}, Smarandache and Vontobel derived upper bounds on the minimum distance of QC-LDPC codes. Rosnes {\em et al.}~\cite{rr2} improved upper bounds on the minimum distance of array LDPC codes. Compared to the upper bounds of \cite{MR2094873, rr1, rr2}, which apply to specific categories of structured LDPC codes, the upper bounds derived here are more general in the sense that they are applicable to any variable-regular LDPC code regardless of whether such a code is structured or randomly constructed.

In this work, we derive tight upper bounds on the minimum size $a$ of $(a,b)$ trapping sets with a given $b$ within a variable-regular Tanner graph with a given $d_v$ and $g$. The bounds are specific to different categories of trapping sets including ETSs, LETSs, NETSs and ETSs with leafs (ETSLs). For $b=0$, the bounds result in upper bounds on the minimum distance of LDPC codes. The derived bounds in many cases are equal to the best known lower bounds and are thus tight. To the best of our knowledge, the derived upper bounds are the first of their kind in that they relate the minimum size of trapping sets or the smallest minimum distance of LDPC codes to the girth and the degree distribution of the code. While, it has been long established that minimum distance of an ensemble of regular LDPC codes increases linearly with the block length $n$ on average, see, e.g.,~\cite{LS}, our results concern finite-length LDPC codes, and demonstrate that, in the worst case, the minimum distance of an ensemble is only a function of $d_v$ and $g$ and is independent of $n$ (constant in $n$). In fact, an important aspect of the results presented in this work is to identify structures of trapping sets and non-zero codewords of the smallest possible size so that they can be found and/or avoided in code constructions.  

In this work, we focus on the category of variable-regular LDPC codes. For this category, it is well established that LETSs are the main culprits in the error floor region~\cite{hashemi2015new}. Simulation results however, show that while 
LETSs well identify the initial position of variable nodes that can cause a decoder failure in the error floor region, the eventual positions in which the errors will occur are in some cases ETSLs. This makes the study of ETSLs worthwhile. We also note that while NETSs appear rarely as errors in the error floor region of many existing variable-regular LDPC codes, this is, at least partly, due to the fact that in many of the existing codes, there are smaller ETSs in the code's Tanner graph that dominate the error floor performance. Should one design new codes that are free of smaller ETSs such that the dominant classes contain both ETS and NETS structures, the relative harmfulness of NETSs can no longer be ignored. This motivates the results presented in this work on NETSs.

An important aspect of the results presented in this paper is the novelty of the technique used to obtain the bounds. We make connections between {\em cages}, which are well-studied in graph theory, and trapping sets. In particular, for each category of trapping sets (ETSs, LETSs, NETSs, ETSLs, non-zero codewords), we start from a cage (or a slightly modified cage) and construct, in multiple steps, a trapping set within the category of interest. The existing results on the size of the cages then allow us to establish upper bounds on the size of trapping sets.

A cage is an $r$-regular graph that has minimum number $n(r,g)$ of nodes for a given girth $g$.
Finding cages for different and increasingly larger values of $r$ and $g$ has been an active and technically rich area of research for decades, see, e.g.,~\cite{MR0211917, MR642638, MR771732, MR1377553, MR2375518, MR2601271, MR2773587, exoo2008dynamic, MR3127005, MR3131381, MR3441655, cage}.
There are a variety of techniques used in studying cages. In general, the following three main categories can be identified: ($1$) search-based techniques, see, e.g.,~\cite{MR771732},~\cite{MR1377553},~\cite{MR2773587},~\cite{exoo2008dynamic}; ($2$) excision methods, see, e.g.,~\cite{MR2601271},~\cite{MR3441655}; and ($3$) geometric techniques, see, e.g.,~\cite{MR0211917},~\cite{MR2375518},~\cite{exoo2008dynamic}. While, a detailed review of these techniques is beyond the scope of this paper, we believe, through the connections established in this work, an interested reader would be able to apply similar techniques to further study the trapping sets of LDPC codes.

The organization of the rest of the paper is as follows: In Section~\ref{sec2}, we provide some definitions, notations and review the best known lower bounds on the smallest size of trapping sets.
In this section, we also present some background on cages. Section~\ref{sec2.5} contains the main contributions of this paper. We start the section by explaining the general technique used for the derivation of the upper bounds and go on
to derive the bounds for different categories of trapping sets, i.e., LETSs, ETSLs and NETSs.
Finally, the paper is concluded with some remarks in Section~\ref{sec6}.

\section{Preliminaries}
\label{sec2}

\subsection{Definitions and Notations}

For a graph $G$, we denote the node set and the edge set of $G$ by $V(G)$ and $E(G)$, or $V$ and $E$, if there is no ambiguity, respectively. In this work, we consider undirected graphs with no loop or parallel edges.
For $v \in V(G)$ and $S \subseteq V(G)$, notations $N(v)$ and $N(S)$ are used to denote
the neighbor set of $v$, and the set of nodes of $G$ which has a neighbor in $S$, respectively.
A {\it path} of length $c$ in the graph $G$ is a sequence of distinct nodes $v_1, v_2, \ldots , v_{c+1}$ in $V(G)$, such that $\{v_i, v_{i+1}\} \in E(G)$, for $1 \leq i \leq c$.
A {\it cycle} of length $c$ is a sequence of distinct  nodes
$v_1, v_2, \ldots , v_{c}$ in $V(G)$ such that $v_1, v_2, \ldots , v_{c}$ form a path of length $c-1$, and
$\{v_c, v_{1}\} \in E(G)$. We may refer to a path or a cycle by the set of their nodes or by the set of their edges. The length of the shortest cycle(s) in a graph is called the {\em girth} of the graph and is denoted by $g$.

A graph $G=(V,E)$ is called {\it bipartite}, if the node set $V(G)$ can be partitioned into two disjoint subsets $U$ and $W$ (i.e., $V(G) = U \cup W \text{ and } U \cap W =\emptyset $), such that every edge in $E$ connects a node
from $U$ to a node from $W$. The {\it Tanner graph} of a low-density parity-check (LDPC) code is a bipartite graph, in
which $U$ and $W$ are referred to as {\it variable nodes} and {\it check nodes}, respectively.

The number of edges incident to a node $v$ is called the {\em degree} of $v$, and is denoted by $d(v)$.
A bipartite graph $G = (U\cup W,E)$ is called {\it bi-regular}, if all the nodes on the same side of the bipartition have the same degree,
i.e., if all the nodes in $U$ have the same degree $d_u$, and all the nodes in $W$ have the same degree $d_w$.
It is clear that, for a bi-regular graph, $|U|d_u=|W|d_w=|E(G)|$. A bipartite graph that is not bi-regular is called {\it irregular}.
A Tanner graph $G = (U\cup W,E)$ is called {\em variable-regular} with variable degree $d_v$,
if for each variable node $u_i\in U$, $d(u_i) = d_v$. Also, a $(d_v, d_c)$-regular Tanner graph is a variable-regular graph with variable degree $d_v$, in which all check nodes have the same degree $d_c$.

In a Tanner graph $G = (U\cup W,E)$, for a subset $S$ of $U$, the induced subgraph of $S$ in the graph $G$, denoted by $G(S)$, is the graph with the set of nodes
$S\cup N(S)$, and the set of edges $\{\{u_i,w_j\}: \{u_i,w_j\} \in E(G) , u_i\in S , w_j \in N(S)\}$.
The set of check nodes with odd and even degrees in $G(S)$ are denoted by $N_o(S)$ and $N_e(S)$, respectively.
Also, the terms {\em unsatisfied} check nodes and {\em satisfied} check nodes are used to refer to the check nodes in $N_o(S)$ and $N_e(S)$, respectively.
Throughout this paper, the size of an induced subgraph $G(S)$ is defined to be the number of its variable nodes (i.e., $|S|$).

For a given Tanner graph $G$, a set $S \subset U$, is said to be an {\it $(a,b)$ trapping set (TS)} if $|S| = a$ and
$|N_o(S)| = b$. Alternatively, set $S$ is said to belong to the {\em class} of $(a,b)$ TSs. An {\it elementary trapping set
(ETS)} is a TS for which all the check nodes in $G(S)$ have degree one or two.
A {\it leafless ETS (LETS)} $S$ is an ETS for which each variable node in $S$ is connected to at least two satisfied check nodes in $G(S)$. Otherwise, the set $S$ is called an ETS with leaf (ETSL).
A {\em non-elementary trapping set (NETS)} is a trapping set which is not elementary. Trapping sets are thus partitioned into three categories of LETSs, ETSLs and NETSs.
We use the notations $\mathcal{T}^{b}_{LETS}(d_v;g)$,  $\mathcal{T}^{b}_{ETSL}(d_v;g)$ and $\mathcal{T}^{b}_{NETS}(d_v;g)$ to denote the smallest size of LETSs, ETSLs and NETSs with $b$ unsatisfied check nodes in variable-regular LDPC codes with variable degree $d_v$ and girth $g$, respectively.

The {\em normal graph} of an ETS $S$ is obtained from $G(S)$ by removing all the check nodes of degree one and their incident edges, and by replacing all
the degree-$2$ check nodes and their two incident edges by a single edge~\cite{MR2991821}. The normal graph of a LETS has no node with degree one, hence, the terminology ``leafless''~\cite{hashemi2015new}.
The normal graphs of ETSLs, on the other hand, have at least one node with degree one.
Since in this work, we consider Tanner graphs with no parallel edges, the girth $g$ of the graphs is at least $4$. The normal graph of an ETS in a Tanner graph with $g=4$ can have parallel edges. Low-density parity-check codes whose Tanner graphs have $g=4$, however, perform poorly under iterative message-passing decoding algorithms. In this paper, we thus consider Tanner graphs with $g \geq 6$. The normal graphs of ETSs in such Tanner graphs have no parallel edges.

The {\em minimum distance} of a linear block code, and thus an LDPC code, is the minimum weight of its non-zero codewords. A non-zero codeword of weight
$a$ in an LDPC code can be viewed as an $(a,0)$ trapping set in the code's Tanner graph. The non-zero codewords of an LDPC code can be partitioned into
two categories of {\em elementary} and {\em non-elementary}, where in the first category, all the satisfied check nodes in the subgraph have degree $2$.
Consider an LDPC code with minimum variable degree at least two. Any elementary (non-elementary) codeword of such an LDPC code corresponds to a LETS (NETS) in the code's Tanner graph.

\subsection{Best known lower bounds on the minimum size of trapping sets}

The following theorem provides the best known general lower bound on the minimum size of trapping sets of variable-regular LDPC codes.

\begin{theo}
\label{lowf}
\cite{hashemilower} Consider a variable-regular Tanner graph $G$ with variable degree $d_{v}$ and girth  $g$.
A lower bound on the size $a$ of an $(a,b)$ trapping set in the graph $G$, whose induced subgraph contains a check node of degree $k\:(\geq 2)$ is given in (\ref{ineq1}),
where $b'=b-(k \mod 2)$, $T=k(d_{v}-1)-b'$, and $b$ is assumed to satisfy $b < k(d_{v}-1)+(k \mod 2)$. (Notation $\mod$ is for modulo operation.)
\begin{table*}[h]
\begin{align}
a \geq
\left\{
\begin{array}{ll}
k+ T \sum\limits_{i=0}^{\lfloor g/4 -2  \rfloor}  (d_{v}-1)^i,    & \text{for}~g/2~\text{even}, \\
k+T \sum\limits_{i=0}^{\lfloor g/4 -2  \rfloor} (d_{v}-1)^i + \max \{\lceil (T (d_{v}-1)^{\lfloor g/4-1\rfloor})/d_{v} \rceil, (d_{v}-1-\lfloor b'/ k \rfloor)(d_{v}-1)^{\lfloor g/4-1\rfloor}\},   & \text{for}~g/2~ \text{odd},
\end{array}
\right.
\label{ineq1}
\end{align}
\end{table*}
\end{theo}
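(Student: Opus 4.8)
The plan is to grow a breadth-first search (BFS) tree inside $G(S)$ rooted at the promised check node $c$ of degree $k$, and to bound $a=|S|$ from below by counting the variable nodes met at the successive odd levels of this tree. Set $V_{1}=N(c)\cap S$, so $|V_{1}|=k$, and for $i\ge 1$ let $W_{2i}$ and $V_{2i+1}$ denote the check nodes and the variable nodes of $G(S)$ at distance $2i$ and $2i+1$ from $c$, respectively. The structural ingredient is the standard fact that, since the girth is $g$, the ball of radius $g/2-1$ around $c$ in $G(S)$ induces a tree: a merging of two BFS branches, or a chord between them, would create a cycle of length less than $g$. In particular, inside this ball every node of $W_{2i}$ has exactly one neighbor in $V_{2i-1}$, every node of $V_{2i+1}$ has exactly one neighbor in $W_{2i}$ and hence exactly $d_{v}-1$ neighbors in $W_{2i+2}$, and every node of $W_{2i}$ of degree at least $2$ in $G(S)$ has at least one neighbor in $V_{2i+1}$, with distinct checks pointing to distinct such variable nodes.

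Let $x_{2i}$ be the number of degree-$1$ (unsatisfied) check nodes in $W_{2i}$. Because $c$ itself accounts for $(k\bmod 2)$ unsatisfied check nodes, $\sum_{i\ge1}x_{2i}\le b-(k\bmod 2)=b'$; the hypothesis $b<k(d_{v}-1)+(k\bmod 2)$ merely says $T=k(d_{v}-1)-b'>0$. Counting the $d_{v}-1$ forward edges out of each node of $V_{2i-1}$ and discarding the at most $x_{2i}$ of the reached checks of $W_{2i}$ that are unsatisfied gives
\[
|V_{2i+1}|\ \ge\ (d_{v}-1)\,|V_{2i-1}|-x_{2i},\qquad 1\le i\le\lfloor g/4-1\rfloor,
\]
with $|V_{1}|=k$, which unrolls to $|V_{2i+1}|\ge k(d_{v}-1)^{i}-\sum_{j=1}^{i}x_{2j}(d_{v}-1)^{i-j}$. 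When $g/2$ is even, $V_{g/2-1}$ is itself a tree level, so summing $a\ge\sum_{i=0}^{\lfloor g/4-1\rfloor}|V_{2i+1}|$, regrouping the geometric sums, bounding the total leakage via $\sum_{j}x_{2j}\le b'$, and using $T=k(d_{v}-1)-b'$ produces exactly the first line of (\ref{ineq1}).

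When $g/2$ is odd the deepest tree level is the check level $W_{g/2-1}$, so the recursion reaches only the variable level $V_{g/2-2}$ and gives $a\ge k+T\sum_{i=0}^{\lfloor g/4-2\rfloor}(d_{v}-1)^{i}$ by the same manipulation. The remaining summand in the second line of (\ref{ineq1}) comes from the next variable level $V_{g/2}$, which lies just outside the tree, so several edges of $W_{g/2-1}$ may hit the same variable node; nevertheless every node of $V_{g/2}$ is new, since coinciding with a node at depth below $g/2$ would force a cycle shorter than $g$. I lower bound $|V_{g/2}|$ in two ways. First, each variable node absorbs at most $d_{v}$ of the forward edges leaving $W_{g/2-1}$, and, after the leakage bookkeeping, there are at least $T(d_{v}-1)^{\lfloor g/4-1\rfloor}$ such edges, so $|V_{g/2}|\ge\lceil T(d_{v}-1)^{\lfloor g/4-1\rfloor}/d_{v}\rceil$. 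Second, by pigeonhole at least one of the $k$ branches issuing from $c$ carries total leakage at most $\lfloor b'/k\rfloor$; expanding that branch alone, with the conservative accounting that charges all of its leakage already at level $2$ (which is safe because leakage deeper in the branch costs less), yields at least $(d_{v}-1-\lfloor b'/k\rfloor)(d_{v}-1)^{\lfloor g/4-1\rfloor}$ variable nodes at depth $g/2$. Adding the larger of the two estimates to $k+T\sum_{i=0}^{\lfloor g/4-2\rfloor}(d_{v}-1)^{i}$ gives the second line of (\ref{ineq1}).

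The routine part is the $g/2$ even case: unrolling the recursion and collapsing a geometric series. I expect the real work to be in the $g/2$ odd case — pinning down how the $b'$ unsatisfied checks are distributed over the even levels so that the leakage is charged exactly once in the telescoped count, selecting the least-leaky branch and verifying that moving its leakage up to level $2$ can only help, and matching the two competing estimates for $|V_{g/2}|$ together with the ceiling and floor operations they involve.
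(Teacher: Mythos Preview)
The paper does not give its own proof of this statement: Theorem~\ref{lowf} is quoted in the Preliminaries section as a known result from \cite{hashemilower} and is used only as a benchmark against which the paper's new upper bounds are compared. There is therefore no proof in the paper to match your attempt against.

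That said, your BFS-tree argument rooted at the degree-$k$ check node, with the level-by-level accounting of degree-$1$ checks as ``leakage'' and the two competing estimates for the half-level $V_{g/2}$ when $g/2$ is odd, is exactly the standard Moore-type approach one would expect for a bound of this shape, and is almost certainly the method used in \cite{hashemilower}. The only place I would tighten your write-up is the first of the two odd-case estimates: the claim that ``after the leakage bookkeeping, there are at least $T(d_v-1)^{\lfloor g/4-1\rfloor}$ forward edges'' out of $W_{g/2-1}$ needs the same telescoping you did for the even case, applied one level further, and you should say so explicitly rather than leaving it implicit. Also, for the branch argument it is worth stating that two checks in $W_{g/2-1}$ from the \emph{same} branch cannot share a neighbor in $V_{g/2}$ (a short-cycle argument with common ancestor at depth at least $1$), so the branch's contribution to $V_{g/2}$ really is a set of distinct new variable nodes.
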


Note that Theorem \ref{lowf} with $k=2$ provides a lower bound on the size of the smallest possible ETSs. Also, Theorem \ref{lowf} with $k=3$ ($k=4$) provides
a lower bound on the size of the smallest possible NETSs with $b > 0$ ($b=0$). The lower bounds of Theorem~\ref{lowf} in some cases are improved in~\cite{hashemilower} for LETSs and ETSLs based on the $dpl$ characterizations of~\cite{hashemi2015new} and~\cite{Y-Arxiv}. Recall that the case of $b=0$ for LETSs and NETSs corresponds to non-zero elementary and non-elementary codewords, respectively.
In particular, for the minimum weight of  non-elementary codewords of a variable-regular LDPC code with
variable degree $d_v$ and girth $g$, the result of Theorem~\ref{lowf} reduces to the following lower bound \cite{hashemi2018}:

\small
\begin{equation}
L_{ne} = \left\{
\begin{array}{lr}
 \max \{\lceil 4 (d_{v}-1)^{\lfloor g/4\rfloor}/d_{v} \rceil, (d_{v}-1)^{\lfloor g/4\rfloor}\} +
 %&\\
\sum\limits_{i=0}^{\lfloor g/4 -1  \rfloor} 4 (d_{v}-1)^i , &  \hspace{2cm} \text{for}~g/2~ \text{odd},
%&
\\
 \sum\limits_{i=0}^{\lfloor g/4 -1  \rfloor} 4 (d_{v}-1)^i,  &  \hspace{2cm} \text{for}~g/2~\text{even}.
 %&
\end{array}
\right.
\label{ineq2}
\end{equation}
\normalsize

\subsection{Background on cages}
\label{subseczz}
An {\it $(r;g)$-graph} is defined to be a graph with girth $g$ in which the degree of each node is $r$. It is well-known that for any $r \geq 2$ and $g \geq 3$, an $(r;g)$-graph exists~\cite{exoo2008dynamic}.
An {\it $(r;g)$-cage} is an $(r;g)$-graph with the fewest possible number of nodes. The number of nodes in an $(r;g)$-cage is denoted by $n(r;g)$. In the rest of the paper, we sometimes refer to $n(r;g)$ as the {\em size} of the corresponding cage. It is known~\cite{exoo2008dynamic} that
\begin{equation}\label{Moore}
n(r;g)\geq \begin{cases}
1+ r \sum_{i=0}^{(g-3)/2} (r-1)^i       & \text{if } g \text{ is odd},\\
2 \sum_{i=0}^{g/2-1}  (r-1)^i            & \text{if } g \text{ is even}.\
\end{cases}
\end{equation}
The above lower bound is called the {\it Moore bound}, and graphs for which equality holds are called {\it Moore graphs} \cite{exoo2008dynamic}.
Table \ref{Table1} summarizes the known values of $n(r;g)$ for small values of $r$ and $g$~\cite{exoo2008dynamic}.

\begin{table}[ht]
\caption{Known values of $n(r;g)$ for small values of $r$ and $g$ \cite{exoo2008dynamic}.}
\begin{center}
\scalebox{1}{
\begin{tabular}{ |c||c|c|c|c|c|c|c|c| c|c| }
\hline
\backslashbox{$r$}{$ g $}      &  3  & 4    & 5     &6     & 7       &8     & 9     & 10  & 11     & 12   \\
\hline
\hline
3                              & 4   &6     & 10    & 14   & 24      &30    & 58     &70  & 112     & 126   \\
\hline
4                              & 5   &8     & 19    &26    &67       &80    & 275    &384 & --      & 728 \\
\hline
5                              & 6   &10    & 30    &42    &152      &170   & --     &--   & --     &2730  \\
\hline
6                              & 7   &12    &40     &62    &294      &312   & --     &--   & --     &7812 \\
\hline
7                              & 8   &14    & 50    &90    &--       &--    & --     &--   & --     &--  \\
\hline
\end{tabular}
}
\end{center}
\label{Table1}
\end{table}

In this work, we also use a generalization of cages to derive some of the upper bounds. Let $r,s,g$ be positive integers such that $r \geq 3$, $g \geq 3$ and $r < s$.
An {\it $(r,s;g)$-graph} is defined to be a graph with girth $g$, in which each node has degree $r$ or $s$. An {\it $(r,s;g)$-cage} is defined as an $(r,s;g)$-graph with the fewest possible number of nodes.
The number of nodes in an $(r,s;g)$-cage is denoted by $n(r,s;g)$. In this work, we are interested in  $(r,r+1;g)$-graphs with only one node of degree $r+1$.
We thus define an {\it $(r,s;g)$-good-graph} to be an $(r,s;g)$-graph that has only one node of degree $s$.
For some values of $r$, $s$ and $g$, there is no $(r,s;g)$-good-graph. For instance, if $r$ is an even number and $s$ is an odd number, then there is no $(r,s;g)$-good-graph.
For each triple $(r,s,g)$, if there is at least one $(r,s;g)$-good-graph, then an {\it $(r,s;g)$-good-cage} is defined to be
an $(r,s;g)$-good-graph with the fewest possible number of nodes. Notation $n'(r,s;g)$ is used to denote the number of nodes in an $(r,s;g)$-good-cage. If such a cage does not exist, we use $n'(r,s;g)=\infty$.

\section{From cages to Trapping Sets: Derivation of Upper Bounds}
\label{sec2.5}

The main idea involved in creating a trapping set from a cage is {\em graph subdivision}.
For a given graph $G$, a {\em subdivision} of an  edge $e$ with endpoints $\{u,v\}$ yields a graph containing one new node $w$, and with an edge set in which $e$ is replaced by two new edges: $\{u,w\}$ and $\{w,v\}$.
A {\it subdivision} of a graph $G$, denoted by $G^{\frac{1}{2}}$, is a graph resulting from the subdivision of every edge in $G$. In $G^{\frac{1}{2}}$, we can partition the nodes into two sets $V_{old}(G^{\frac{1}{2}})$ and $V_{new}(G^{\frac{1}{2}})$, where $V_{old}(G^{\frac{1}{2}})=V(G)$. To create a bipartite (Tanner) graph from the graph $G^{\frac{1}{2}}$, we replace each node in $V_{old}$ with a variable node and each node in $V_{new}$ with a check node. We denote the resulting graph by $\widetilde{G^{\frac{1}{2}}}$. Fig.~\ref{F99}(b) shows $\widetilde{G^{\frac{1}{2}}}$ for the $(3;3)$-cage $G$, shown in Fig.~\ref{F99}(a).

\begin{figure}[]
	\centering
	\includegraphics [width=0.5\textwidth]{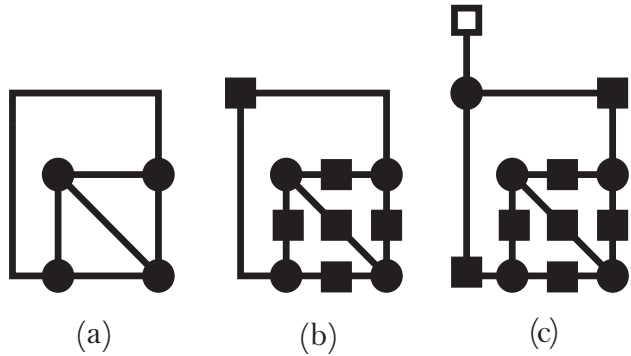}
	\caption{(a) The $(3;3)$-cage $G$, (b) Tanner graph $\widetilde{G^{\frac{1}{2}}}$ (a $(4,0)$ LETS structure in variable-regular LDPC codes with $d_v=3$ and $g=6$), (c) Modification of $\widetilde{G^{\frac{1}{2}}}$ to a $(5,1)$ LETS structure.}
	\label{F99}
\end{figure}

To construct a trapping set with specific $b$, $g$ and $d_v$ values, we start from a suitable cage $G$. Cage $G$ can be a $(d_v;\frac{g}{2})$-cage, a  $(d_v;\frac{g+2}{2})$-cage or a generalized cage, as described in Subsection~\ref{subseczz}. We then
convert the cage $G$ to a Tanner graph $\widetilde{G^{\frac{1}{2}}}$. Finally, by some simple modifications on $\widetilde{G^{\frac{1}{2}}}$ (or multiple copies of $\widetilde{G^{\frac{1}{2}}}$), such as adding or removing some variable nodes, check nodes or edges, we construct a trapping set. In the following, we explain the process and derive the upper bounds for LETSs, ETSLs and NETSs, in Subsections~\ref{sec3} to~\ref{sec5}, respectively. Within each subsection,
we cover different values of $d_v \geq 3$, $b \geq 0$, and $g \geq 6$. We note that cases with $d_v \geq 3$ and $g \geq 6$ are those of interest in the context of variable-regular LDPC codes. In each subsection, we first derive general upper bounds. These bounds are then calculated for specific values of $d_v$, $g$ and $b$, and compared with the best known lower bounds. The range of specific values considered in this paper are $3 \leq d_v \leq 6$, $6 \leq g \leq 16$, and $0 \leq b \leq 5$. To see the theoretical and practical relevance of LDPC codes with larger girths and variable node degrees, the reader is referred to~\cite{MR2951330, MR2400586, MR3480065, MR2236186, MR2245116, eee1, MR3285723, MR2798990, eee2}, and the references therein. In general, LDPC codes with larger girths are desirable as they perform better under iterative message-passing decoding algorithms. Lerger variable node degrees are attractive as they improve the error floor performance. 

It is known that depending on the values of $d_v$ and $b$, certain trapping set classes may not exist. The following lemma describes such cases.

\begin{lem}
\label{lem:cannot}
\cite{hashemilower}
In a variable-regular Tanner graph with variable degree $d_{v}$, (i) if $d_{v}$ is odd, then there does not exist any $(a,b)$ TS with odd $a$ and even $b$, or with even $a$ and odd $b$;
and (ii) if $d_{v}$ is even, then there does not exist any $(a,b)$ TS with odd $b$.
\end{lem}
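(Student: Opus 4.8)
The plan is to argue via a simple parity (counting) argument on the edges of the induced subgraph $G(S)$. Let $S$ be an $(a,b)$ trapping set in a variable-regular Tanner graph with variable degree $d_v$, and consider the bipartite induced subgraph $G(S)$ with variable-node side $S$ (of size $a$) and check-node side $N(S)$. Since every variable node in $S$ has degree exactly $d_v$ inside $G$ and all $d_v$ of its incident edges land in $N(S)$, the total number of edges of $G(S)$ is $|E(G(S))| = a\, d_v$. On the other hand, summing degrees over the check side, $|E(G(S))| = \sum_{w \in N(S)} d(w)$. Splitting this sum into the odd-degree checks $N_o(S)$ (there are $b$ of them) and the even-degree checks $N_e(S)$, the contribution of $N_e(S)$ is even, so the parity of $|E(G(S))|$ equals the parity of $\sum_{w \in N_o(S)} d(w)$, which in turn has the same parity as $b$ (a sum of $b$ odd numbers). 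Hence $a\, d_v \equiv b \pmod 2$.

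From the congruence $a\, d_v \equiv b \pmod 2$ both parts follow immediately. For part (i), if $d_v$ is odd then $a\,d_v \equiv a \pmod 2$, so $a \equiv b \pmod 2$; thus $a$ odd forces $b$ odd and $a$ even forces $b$ even, which is exactly the statement that no $(a,b)$ TS exists with odd $a$ and even $b$, or with even $a$ and odd $b$. For part (ii), if $d_v$ is even then $a\,d_v \equiv 0 \pmod 2$, so $b \equiv 0 \pmod 2$, i.e. there is no $(a,b)$ TS with odd $b$.

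The argument is essentially a one-line handshake/parity lemma, so there is no real obstacle; the only point requiring a word of care is the bookkeeping that every edge incident to a node of $S$ indeed has its other endpoint in $N(S)$ (true by definition of $N(S)$ and of the induced subgraph $G(S)$), so that the left-hand count is exactly $a\,d_v$ with no edges unaccounted for. One should also note explicitly that a check node of $N(S)$ has even degree in $G(S)$ precisely when it is ``satisfied'', so the split of the edge sum into even and odd parts is legitimate. Everything else is immediate from the definitions of variable-regular Tanner graph and of the classes $N_o(S), N_e(S)$ given in Section~\ref{sec2}.
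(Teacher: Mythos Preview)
Your proof is correct: the handshake identity $a\,d_v = \sum_{w\in N(S)} d(w)$ together with the observation that the even-degree checks contribute an even number and the $b$ odd-degree checks contribute a number of the same parity as $b$ gives $a\,d_v \equiv b \pmod 2$, from which both parts follow immediately.

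There is nothing to compare against in this paper: Lemma~\ref{lem:cannot} is stated with a citation to \cite{hashemilower} and is not proved here. Your parity argument is the standard one and is exactly the kind of proof one would expect for this result.
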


\subsection{Leafless elementary trapping sets (LETSs)}
\label{sec3}

In the following, we first derive general upper bounds on, or provide the exact value of, $\mathcal{T}^{b}_{LETS}(d_v;g)$ for different values of $g \geq 6$, $d_v \geq 3$ and $b \geq 0$. We then improve some of these bounds for specific values of $g$, $d_v$ or $b$.

\begin{theo}\label{Th92}
(i) For any $g \geq 6$ and any $d_v \geq 3$, we have (a) $\mathcal{T}^{0}_{LETS}(d_v;g)=n(d_v;\frac{g}{2})$, (b) $\mathcal{T}^{d_v-2}_{LETS}(d_v;g) \leq n(d_v;\frac{g}{2})+1$,  (c) $\mathcal{T}^{d_v}_{LETS}(d_v;g) \leq n(d_v;\frac{g}{2})-1$, and (d) $\mathcal{T}^{b}_{LETS}(d_v;g) \leq n(d_v;\frac{g}{2})$, if $b$ is an even value satisfying $b < d_v+2$.\\
(ii) For any $g \geq 8$ and $d_v \geq 3$, we have $\mathcal{T}^{2(d_v-1)}_{LETS}(d_v;g) \leq n(d_v;\frac{g}{2})-2$.\\
(iii) For any $g \geq 6$ and $d_v \geq 4$, or $d_v=3$ and $g \geq 10$, we have $\mathcal{T}^{d_v+2}_{LETS}(d_v;g) \leq n(d_v;\frac{g}{2})-1$.
%For any $g \geq 6$, we have (i) $\mathcal{T}^{1}_{LETS}(3;g)\leq n(3;\frac{g}{2})+1$, and (ii) $\mathcal{T}^{1}_{LETS}(5;g)\leq n'(5,6;\frac{g}{2})$.
\end{theo}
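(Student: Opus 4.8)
The plan is to construct explicit LETS structures by starting from a $(d_v;\frac{g}{2})$-cage $G$, converting it to the Tanner graph $\widetilde{G^{\frac{1}{2}}}$, and then performing small local modifications (adding or removing variable nodes, check nodes, or edges) to adjust the number $b$ of unsatisfied check nodes while keeping the girth at least $g$ and the structure leafless. The key observation underlying all parts is that in $\widetilde{G^{\frac{1}{2}}}$ the variable nodes are the $n(d_v;\frac{g}{2})$ old nodes (each of degree $d_v$) and every check node is a new node of degree $2$ (hence satisfied), so $\widetilde{G^{\frac{1}{2}}}$ realizes a $(n(d_v;\frac{g}{2}),0)$ LETS of girth $g$; this is exactly part~(i)(a) together with the matching lower bound $\mathcal{T}^{0}_{LETS}(d_v;g)=n(d_v;\frac{g}{2})$, which presumably follows from Theorem~\ref{lowf} or from a girth-doubling argument (a LETS with $b=0$ has all check nodes of degree $2$, so its normal graph is a $d_v$-regular graph of girth $\geq g/2$, forcing at least $n(d_v;\frac{g}{2})$ variable nodes). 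Each subsequent item is then a matter of exhibiting one concrete modification.

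For part~(i)(b), starting from $\widetilde{G^{\frac{1}{2}}}$ I would add one new variable node $u$ and attach it to $d_v$ distinct degree-$2$ check nodes of $\widetilde{G^{\frac{1}{2}}}$, chosen pairwise far apart in the cage so that no short cycle is created (since the cage has many nodes and large girth relative to $d_v$, such a spread-out choice exists); each of those check nodes becomes degree $3$, i.e. unsatisfied, but by picking the attachment points so that exactly $d_v-2$ of the affected checks end up odd — more precisely, one first subdivides two of the $d_v$ new edges at $u$ through two fresh degree-$2$ checks so that only $d_v-2$ checks go from degree $2$ to degree $3$ — we obtain an $(n(d_v;\frac{g}{2})+1,\,d_v-2)$ LETS; leaflessness of $u$ is guaranteed because $u$ meets at least two even-degree checks (the two freshly subdivided ones), and all old variable nodes retain their original even-degree checks. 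For part~(i)(c), I would instead remove one variable node $v$ from $\widetilde{G^{\frac{1}{2}}}$: this deletes $v$ and drops the degree of its $d_v$ incident checks from $2$ to $1$, turning them into $d_v$ degree-$1$ checks; to keep the set leafless one pairs up these leaf-checks by identifying them appropriately or, as in Fig.~\ref{F99}(c), by a small local rewiring that merges pairs of leaves into degree-$2$ checks, leaving an odd number $d_v$... in fact one merges $\lfloor d_v/2\rfloor$ pairs, which for the relevant parity (using Lemma~\ref{lem:cannot}) yields exactly $b=d_v$ unsatisfied checks on $n(d_v;\frac{g}{2})-1$ variable nodes, with every remaining variable node still seeing $\geq 2$ satisfied checks because each had $d_v\geq 3$ of them and lost at most one. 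Part~(i)(d) interpolates between these: for an even $b<d_v+2$ one removes a variable node as in~(c) and then re-subdivides/re-pairs the resulting leaves so that precisely $b$ of the checks are odd while adding back one variable node to restore the count to $n(d_v;\frac{g}{2})$; alternatively one does the $u$-addition of~(b) and the $v$-removal of~(c) simultaneously and tunes the pairing, the net change in variable count being zero.

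Parts~(ii) and~(iii) require slightly more aggressive surgery. For~(ii), with $g\geq 8$ the cage girth $g/2\geq 4$ gives extra room: I would remove two variable nodes and all their incident edges, producing $2d_v$ degree-$1$ checks; pairing these up in $d_v-1$ pairs and leaving $2$ unpaired — or rather pairing so that $2d_v - 2(d_v-1) = 2$ leaves remain, then deleting those two leaf-checks entirely — leaves $b=2(d_v-1)$ unsatisfied checks on $n(d_v;\frac{g}{2})-2$ variables; the girth bound $g/2\geq 4$ is what lets the pairing be done without creating a $4$-cycle in the Tanner graph. For~(iii), to get $b=d_v+2$ on $n(d_v;\frac{g}{2})-1$ variables, I would remove one variable node (giving $d_v$ leaf-checks) and additionally convert one satisfied degree-$2$ check incident to some other variable into two degree-$1$ checks by splitting; with the right pairing of the resulting leaves one lands on $b=d_v+2$, and the extra hypothesis ($d_v\geq 4$, or $d_v=3$ with $g\geq 10$) is exactly the condition under which enough ``local slack'' exists to keep all variable nodes leafless and the girth at~least $g$ — for $d_v=3$ a degree-$2$-check split near a variable of degree $3$ is dangerous and needs the larger cage ($g/2\geq 5$) to avoid short cycles. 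The main obstacle throughout is verifying that each local modification can be placed inside the cage without creating a cycle shorter than $g/2$ in the underlying graph (equivalently shorter than $g$ in the Tanner graph); this amounts to showing the cage has enough nodes at pairwise distance $\geq g/2$ from a given vertex or edge, which follows from a counting argument against the Moore bound~(\ref{Moore}) combined with the regularity of the cage, and from checking leaflessness, which reduces to the bookkeeping that every surviving variable node of degree $d_v\geq 3$ loses at most one of its satisfied checks in any single modification.
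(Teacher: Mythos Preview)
Your outline for part~(i)(a) is correct and matches the paper. Beyond that, however, there is a genuine and recurring gap: you misunderstand what \emph{leafless} means. A LETS may (and for $b>0$ must) contain degree-$1$ check nodes; the leafless condition requires only that every \emph{variable} node be incident to at least two satisfied (even-degree) check nodes. Thus in~(i)(c) no ``pairing of leaf-checks'' is needed or wanted: simply deleting one variable node $v$ from $\widetilde{G^{1/2}}$ already yields a LETS with $b=d_v$ on $n(d_v;\tfrac{g}{2})-1$ variables, since every remaining variable loses at most one of its $d_v\geq 3$ degree-$2$ check neighbours. Your proposed merging of degree-$1$ checks would instead \emph{reduce} $b$ and risk creating short cycles. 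The same error infects your~(ii): the paper removes two \emph{adjacent} variable nodes $v_1,v_2$ together with their common check $c$; the $2(d_v-1)$ remaining check neighbours of $v_1,v_2$ are distinct precisely because $g\geq 8$, and they \emph{are} the $2(d_v-1)$ unsatisfied checks---no pairing, and your arithmetic (which would yield $b=2$ or $b=0$) is off.

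Your construction for~(i)(b) has a separate problem: attaching the new variable $u$ to existing degree-$2$ checks makes those checks degree~$3$, which violates the \emph{elementary} condition (ETS checks must have degree $1$ or $2$). The paper instead subdivides a single edge of the cage: remove one check $c$ with neighbours $v,u$, insert a new variable $w$ joined to $v$ and $u$ through two fresh degree-$2$ checks, and hang $d_v-2$ degree-$1$ checks on $w$. This keeps all checks of degree $\leq 2$, makes $w$ leafless (it has exactly two satisfied neighbours), and only lengthens cycles. Similarly, for~(i)(d) the paper removes $b/2$ check nodes and replaces them by $b$ fresh degree-$1$ checks---again a direct, local modification with no variable-count change. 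Your sketch for~(iii) is close to the paper's, but the reason for the hypothesis $d_v\geq 4$ (or $d_v=3$, $g\geq 10$) is leaflessness of $u_1,u_2$, not cycle length: if $d_v=3$ and $v$ shares a check with $u_1$, then after both modifications $u_1$ would have only one satisfied neighbour.
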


\begin{proof}{
$(i)(a)$ Let $H$ be a $(d_v;\frac{g}{2})$-graph. The Tanner graph $\widetilde{H^{\frac{1}{2}}}$ is then a LETS with no unsatisfied check node ($b=0$) in a variable-regular LDPC code with girth $g$ and variable degree $d_v$.
We thus have $\mathcal{T}^{0}_{LETS}(d_v;g) \leq n(d_v;\frac{g}{2})$. Now, we show that $\mathcal{T}^{0}_{LETS}(d_v;g) = n(d_v;\frac{g}{2})$.
To the contrary, assume that  $\mathcal{T}^{0}_{LETS}(d_v;g) < n(d_v;\frac{g}{2})$, and let $S$ be a LETS of size $\mathcal{T}^{0}_{LETS}(d_v;g)$ in a variable-regular LDPC Tanner graph $G$ with girth $g$ and variable degree $d_v$,
such that $G(S)$ does not have any unsatisfied check node. The normal graph of $G(S)$ is a graph with girth at least $\frac{g}{2} (\geq 3)$, in which there are $\mathcal{T}^{0}_{LETS}(d_v;g)$ nodes, each with degree $d_v$.
Since the function $n(d_v; k)$ increases monotonically with $k$ \cite{exoo2008dynamic}, this is a contradiction.

$(i)(b)$ Let $G$ be a $(d_v;\frac{g}{2})$-graph. Consider the Tanner graph $\widetilde{G^{\frac{1}{2}}}$. Let $v$ and $u$ be two variable nodes and $c$ be a check node in $\widetilde{G^{\frac{1}{2}}}$ such that $vc,uc\in E(\widetilde{G^{\frac{1}{2}}})$. (For the simplification of notations, we use $vc$ or $cv$ to indicate the undirected edge $\{v,c\}$.) Remove the check node $c$ and its incident edges from  $\widetilde{G^{\frac{1}{2}}}$. Then, add a variable node $w$, $d_v$ check nodes $c_{w,v},c_{w,u},c_{w_1}, \ldots, c_{w_{d_v-2}}$, and the edges $c_{w,v}w, c_{w,v}v, c_{w,u}w, c_{w,u}u, c_{w_1} w, \ldots, c_{w_{d_v-2}} w$, to the graph. The resultant Tanner graph is a LETS with $d_v-2$ unsatisfied check node in a variable-regular LDPC code with variable degree $d_v$ and girth at least $g$.

$(i)(c)$ Let $G$ be a $(d_v;\frac{g}{2})$-graph. Consider the Tanner graph $\widetilde{G^{\frac{1}{2}}}$. Let $v $ be a variable node in $\widetilde{G^{\frac{1}{2}}}$. Remove the variable node $v$ and its incident edges from $\widetilde{G^{\frac{1}{2}}}$. The resultant Tanner graph is a LETS with $b=d_v$ in a variable-regular LDPC code with variable degree $d_v$ and girth at least $g$.

$(i)(d)$ Let $G$ be a $(d_v;\frac{g}{2})$-graph. Consider the Tanner graph $\widetilde{G^{\frac{1}{2}}}$. Let $v_1, \ldots, v_b,\: b = 2i,\: i \geq 1$, be $b$ distinct variable nodes in $\widetilde{G^{\frac{1}{2}}}$ such that $v_k$ and $v_{i+k}$, for any $k=1, \ldots, i$, share check node $c_k$ as a neighbor. (Note that by the condition $b < d_v+2$, one can always find such group of variable nodes.)
%there are two check nodes $c,c'$ such that $v_1c,v_2c,v_3c',v_4c'\in E(\widetilde{G^{\frac{1}{2}}})$.
Remove the check  nodes $c_1, \ldots, c_i$, and their incident edges from $\widetilde{G^{\frac{1}{2}}}$. Then, add $b=2i$ check nodes $c_{v_1}, \ldots, c_{v_{b}}$ and $b$ edges $c_{v_1}v_1, \ldots, c_{v_b}v_b$ to the Tanner graph. The resultant Tanner graph is a LETS with $b$ degree-one check nodes in a variable-regular LDPC code with variable degree $d_v$ and girth at least $g$.

$(ii)$ Let $G$ be a $(d_v;\frac{g}{2})$-graph. Consider the Tanner graph $\widetilde{G^{\frac{1}{2}}}$. Let $v_1,v_2$ be two variable nodes and $c$ be a check node in $\widetilde{G^{\frac{1}{2}}}$ such that $v_1c,v_2c\in E(\widetilde{G^{\frac{1}{2}}})$. Since $g$ is at least eight, there are $\eta = 2(d_v-1)$ distinct check nodes $c_1, \ldots, c_{\eta}$ in $\widetilde{G^{\frac{1}{2}}}$ such that $v_1c_1, \ldots, v_1c_{\eta/2}, v_2 c_{\eta/2+1}, \ldots,  v_2c_{\eta} \in E(\widetilde{G^{\frac{1}{2}}})$. Also, there are $\eta$ distinct variable nodes $u_1, \ldots, u_{\eta}$ (in addition to $v_1$ and $v_2$) such that $u_1c_1, \ldots, u_{\eta}c_{\eta} \in E(\widetilde{G^{\frac{1}{2}}})$ (otherwise the girth is at most six).
Now, remove the variable  nodes $v_1,v_2$, check node $c$ and their incident edges from $\widetilde{G^{\frac{1}{2}}}$. The resultant Tanner graph is a LETS with $b= 2(d_v-1)$ in a variable-regular LDPC code with variable degree $d_v$ and girth at least $g$. %So, we have $\mathcal{T}^{4}_{LETS}(3;g)\leq n(3;\frac{g}{2})-2$, where $g\geq 8$.

$(iii)$ Let $G$ be a $(d_v;\frac{g}{2})$-graph with $d_v \geq 4$ and $g \geq 6$. Consider the Tanner graph $\widetilde{G^{\frac{1}{2}}}$. Let $v,u_1,u_2$ be three distinct variable nodes in $\widetilde{G^{\frac{1}{2}}}$ and $c $ be a check node such that $cu_1,cu_2\in E(\widetilde{G^{\frac{1}{2}}})$. Remove the variable  node $v$ and its incident edges from $\widetilde{G^{\frac{1}{2}}}$. Also, remove the check  node $c$ and its incident edges. Now, add two check nodes $c_{u_1},c_{u_2}$ and the edges $c_{u_1}u_1,c_{u_2}u_2$ to the graph. The resultant Tanner graph is a LETS with $b=d_v+2$ in a variable-regular LDPC code with variable degree $d_v$ and girth at least $g$. (Note that $v$ can share at most one neighboring check node with either $u_1$ or $u_2$. So, after the modifications, both $u_1$ and $u_2$ are still connected to at least two degree-$2$ check nodes.) For the case of $d_v=3$, based on the constraint $g \geq 10$, one can find three distinct variable nodes $v,u_1,u_2$ in $\widetilde{G^{\frac{1}{2}}}$ and a check node $c$ such that $cu_1,cu_2\in E(\widetilde{G^{\frac{1}{2}}})$, and $v$ does not have any common neighbor with $u_1$ or $u_2$. Remove the variable node $v$ and its incident edges, as well as the check  node $c$ and its incident edges from $\widetilde{G^{\frac{1}{2}}}$. Now, add two check nodes $c_{u_1},c_{u_2}$ and the edges $c_{u_1}u_1,c_{u_2}u_2$ to the graph. The resultant Tanner graph is a LETS with $b= d_v+2 = 5$ in a variable-regular LDPC code with $d_v=3$ and girth at least $g$.
}\end{proof}

\subsubsection{Exact value of $\mathcal{T}^{0}_{LETS}(d_v;g)$}

Part $(i)(a)$ of Theorem~\ref{Th92} provides the exact value of $\mathcal{T}^{0}_{LETS}(d_v;g)$ in terms of the size of the corresponding cage. Fig.~\ref{F99}(b) is an example of the construction used in this theorem, where a $(4,0)$ LETS is constructed from the $(3;3)$-cage.

\begin{table}[ht]
\caption{Comparison of the results of Part $(i)(a)$ of Theorem~\ref{Th92} with existing results for $\mathcal{T}^{0}_{LETS}(d_v;g)$}
\begin{center}
\scalebox{1}{
\begin{tabular}{ |l||c|c|c|c|c|c|   }
\hline
$\mathcal{T}^{0}_{LETS}(3;g)$  &  $g=$6  & $g=$8    & $g=$10    &$g=$12    & $g=$14     &$g=$16       \\
\hline
\hline
Lower bound  of Theorem~\ref{lowf}   & 4   & 6   & 10    & 14   & 22    & 30        \\
\hline
Known exact  value~\cite{hashemilower},~\cite{hashemi2018}             & 4       &6         & 10       &--          &--         &--     \\
\hline
 Theorem~\ref{Th92}   &  4   & 6   &  10    & 14   & 24    & 30        \\
\hline
 \multicolumn{7}{|l|}{$\mathcal{T}^{0}_{LETS}(4;g)$} \\
\hline
Lower bound  of Theorem~\ref{lowf}    &  5   &   8   &   17    &   26   &  53    &   80        \\
\hline
Known exact  value~\cite{hashemilower},~\cite{hashemi2018}           & 5       &8          & 19       &--          &--         &--     \\
\hline
Theorem~\ref{Th92}   &   5   &   8   &    19    &   26   &   67    &   80        \\
\hline
 \multicolumn{7}{|l|}{$\mathcal{T}^{0}_{LETS}(5;g)$} \\
\hline
Lower bound  of Theorem~\ref{lowf}    &  6   &   10   &   26    &   42   &  106    &   170        \\
\hline
Known exact  value~\cite{hashemilower},~\cite{hashemi2018}          & 6       &10         & 30       &--          &--         &--     \\
\hline
Theorem~\ref{Th92}    &   6   &   10   &    30    &   42   &   152    &   170        \\
\hline
 \multicolumn{7}{|l|}{$\mathcal{T}^{0}_{LETS}(6;g)$} \\
\hline
Lower bound  of Theorem~\ref{lowf}      &  7    &   12    &   37    &   62   &  187    &   312        \\
\hline
Known exact  value~\cite{hashemilower},~\cite{hashemi2018}            & 7       &12         & --       &--          &--         &--     \\
\hline
Theorem~\ref{Th92}   &   7   &   12   &    40    &   62   &  294    &   312        \\
\hline
\end{tabular}
}
\end{center}
\label{Table2}
\end{table}

In Table \ref{Table2}, we compare the results of Part $(i)(a)$ of Theorem~\ref{Th92} with previously existing results for different values of $d_v$ and $g$. As expected, the results of Theorem~\ref{Th92} match any previous result obtained in
\cite{hashemilower} and~\cite{hashemi2018} based on $dpl$ characterization on the exact value of  $\mathcal{T}^{0}_{LETS}(d_v;g)$. Due to the high complexity, however, the results for larger girth values are not
reported in \cite{hashemilower} and~\cite{hashemi2018}.\footnote{The time limitation for calculations, indicated in~\cite{hashemilower}, is a day on a desktop computer with $2.4$-GHz CPU and $8$-GB RAM.}
These correspond to entries with symbol ``-''. We note that for some of the cases where the exact value of $\mathcal{T}^{0}_{LETS}(d_v;g)$ was not previously known, our results have a rather large gap with the lower bound of
Theorem~\ref{lowf}. For some other cases, the lower bound of  Theorem~\ref{lowf} is tight. In fact, based on the results of Table~\ref{Table2}, one may suggest that the lower bound of Theorem~\ref{lowf} is always tight for
even values of $g/2$. This, however, is not the case. For example, based on Theorem~\ref{Th92}, the exact value of $\mathcal{T}^{0}_{LETS}(3;20)$ is $72$, while the lower bound of Theorem~\ref{lowf}, for this case, is $62$.

\begin{rem}\label{R1}
The number of non-isomorphic $(x;y)$-cages for small values of $x$ and $y$ are known, see, e.g., \cite{exoo2008dynamic}. Thus, by Part $(i)(a)$ of Theorem \ref{Th92}, we can determine
the number of non-isomorphic $(n(d_v;g/2),0)$ LETSs  in Tanner graphs with girth $g$ and variable degree $d_v$. Note that if $a< n(d_v;g/2)$, then there is no $(a,0)$ LETSs in the Tanner graphs with girth $g$ and variable degree $d_v$, and
thus $n(d_v;g/2)$ is the smallest size of LETSs with $b=0$ (lowest weight of elementary codewords). The girth of minimum size  LETSs with $b=0$ must be $g$,\footnote{Because otherwise, the normal graph of such a LETS will be a graph with $n(d_v;g/2)$ nodes, where all the nodes have degree $d_v$ and the girth of the graph is larger than $g/2$. This contradicts the result of Erd\H{o}s and Sachs (see \cite{exoo2008dynamic}, page 5) which states that if $G$ has the minimum number of nodes for a $k$-regular graph with girth at least $g$, then the girth of $G$ is exactly $g$.} and thus there is a one-to-one correspondence between non-isomorphic
$(d_v,g/2)$-cages and non-isomorphic $(n(d_v;g/2),0)$ LETSs in Tanner graphs with girth $g (\geq 6)$ and variable degree $d_v (\geq 3)$.
For instance, there is only one $(3;5)$-cage and that cage, which has $10$ nodes, is the so-called ``Petersen graph'' \cite{exoo2008dynamic}. The unique $(10,0)$
LETS corresponding to the Petersen graph is shown in Fig. \ref{F88}. (In trapping set subgraphs, variable nodes, satisfied check nodes and unsatisfied check nodes are represented by circles,
full squares and empty squares, respectively.)
\end{rem}

\begin{figure}[]
\centering
\includegraphics [width=0.25\textwidth]{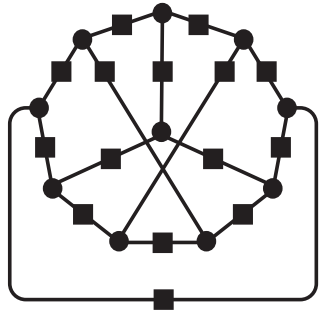}
\caption{The unique $(10,0)$ LETS in Tanner graphs with girth $10$ and $d_v=3$. }
\label{F88}
\end{figure}

\subsubsection{Upper bounds on $\mathcal{T}^{1}_{LETS}(d_v;g)$}

We know, based on Lemma~\ref{lem:cannot}, that there are no trapping sets with an odd value of $b$ in a Tanner graph with an even value of $d_v$. We use Part $(i)(b)$ of Theorem~\ref{Th92} to find an upper bound on $\mathcal{T}^{1}_{LETS}(3;g)$. Fig.~\ref{F99}(c) is an example of the construction used in Part $(i)(b)$ of Theorem~\ref{Th92}, where a $(5,1)$ LETS is constructed from the $(3;3)$-cage.

For $d_v=5$, we use the following result, whose proof is provided in the appendix, to derive an upper bound on $\mathcal{T}^{1}_{LETS}(5;g)$.

\begin{fact}
\label{F1}
For any $g \geq 6$, we have $\mathcal{T}^{1}_{LETS}(5;g) \leq n'(5,6;\frac{g}{2})$.
\end{fact}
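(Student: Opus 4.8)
The plan is to reuse the cage-to-trapping-set recipe of Theorem~\ref{Th92}, but to feed it a generalized cage instead of an ordinary one. If there is no $(5,6;\frac{g}{2})$-good-graph, then $n'(5,6;\frac{g}{2})=\infty$ and the inequality holds vacuously, so I may assume one exists and let $H$ be a $(5,6;\frac{g}{2})$-good-cage: a graph of girth $\frac{g}{2}\ (\geq 3)$ with exactly one vertex $x$ of degree $6$, all of its remaining $n'(5,6;\frac{g}{2})-1$ vertices of degree $5$, and $|V(H)|=n'(5,6;\frac{g}{2})$.

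Next I would form the Tanner graph $\widetilde{H^{\frac{1}{2}}}$. Because subdivision doubles girth, $\widetilde{H^{\frac{1}{2}}}$ has girth $g$; every check node has degree $2$; the variable node $\tilde{x}$ corresponding to $x$ has degree $6$; and every other variable node has degree $5$. The only thing preventing $\widetilde{H^{\frac{1}{2}}}$ from being a $d_v=5$ variable-regular Tanner graph is the lone degree-$6$ node $\tilde{x}$, which I would fix by choosing any check node $c$ incident to $\tilde{x}$ and deleting the edge $\tilde{x}c$. This $c$ is the subdivision vertex of some edge $\{x,y\}$ of $H$ with $y\neq x$ (as $H$ has no loops), so after the deletion $c$ retains exactly its edge to the variable node $\tilde{y}$, while $\tilde{x}$ drops to degree $5$. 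The modified graph is therefore variable-regular with $d_v=5$ and has girth at least $g$.

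It then remains to check that the set $S$ of all variable nodes of the modified graph is a LETS with $b=1$. The check node $c$ now has degree $1$ and every other check node still has degree $2$, so $|N_o(S)|=1$ and the subgraph is elementary. For leaflessness I would inspect the three kinds of variable nodes: $\tilde{x}$ is still adjacent to the five degree-$2$ check nodes other than $c$; $\tilde{y}$, which came from an old degree-$5$ vertex, is adjacent to four degree-$2$ check nodes besides the degree-$1$ node $c$; and every other variable node keeps all five of its degree-$2$ check neighbours. In each case there are at least two satisfied check nodes, so $S$ is leafless. Since $|S|=|V(H)|=n'(5,6;\frac{g}{2})$, we obtain $\mathcal{T}^{1}_{LETS}(5;g)\leq n'(5,6;\frac{g}{2})$.

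The construction is short, and I do not expect a genuine obstacle; the points that need care are (i) recognizing that the ``good-cage'' is exactly the right object---an ordinary $(5;\frac{g}{2})$-cage will not do, since deleting an edge from its subdivision would create a degree-$4$ variable node---and (ii) the leaflessness bookkeeping at the vertex $\tilde{y}$, the one place where the number of satisfied check neighbours could in principle fall below two but in fact stays at four. Consistency with Lemma~\ref{lem:cannot} comes for free: $\sum_{v\in V(H)}d(v)=6+5\bigl(n'(5,6;\frac{g}{2})-1\bigr)$ must be even, which forces $n'(5,6;\frac{g}{2})$ to be odd, so the constructed LETS has odd $a$ together with odd $b=1$, exactly as required for odd $d_v$.
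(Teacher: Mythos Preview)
Your proposal is correct and follows essentially the same construction as the paper: take a $(5,6;\tfrac{g}{2})$-good-cage, subdivide to obtain $\widetilde{H^{1/2}}$, and delete one edge incident to the unique degree-$6$ variable node so that it drops to degree $5$ while the adjacent check node becomes the sole degree-$1$ (unsatisfied) check. Your write-up is in fact more careful than the paper's---you explicitly handle the vacuous case $n'(5,6;\tfrac{g}{2})=\infty$ (the paper instead argues in a footnote that good-graphs always exist), and you spell out the leaflessness check at $\tilde{x}$, $\tilde{y}$, and the remaining nodes, which the paper leaves implicit.
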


For $\frac{g}{2}=3$, it was shown in \cite{MR642638} that $n'(5,6;3)=7$. To the best of our knowledge, there is no result available in the literature for $\frac{g}{2}=4$.
For $\frac{g}{2}=5$, it was shown in \cite{cage} that $n'(5,6;5)=31$. For the cases where $\frac{g}{2} \geq 6$, also, to the best of our knowledge, not much
progress in computing $n'(5,6;\frac{g}{2})$ (or $n(5,6;g/2)$) has been made.

In Table \ref{TableNN2}, we compare the upper bounds of Part $(i)(b)$ of Theorem~\ref{Th92} and Fact~\ref{F1} with existing exact values and lower bounds. As can be seen in the table, for $d_v=3$, and for all the cases where the exact value of $\mathcal{T}^{1}_{LETS}(3;g)$ is known, our upper bounds match the exact value. For $d_v=5$, based on the existing graph theoretical results, we are only able to provide our upper bounds for $g=6$ and $g=10$.
For $g=6$, which is the only case out of the two where the exact value of $\mathcal{T}^{1}_{LETS}(3;g)$ is known, again our bound is tight.

\begin{table}[ht]
\caption{Comparison of upper bounds of Part $(i)(b)$ of Theorem~\ref{Th92} and fact~\ref{F1} with existing results for $\mathcal{T}^{1}_{LETS}(d_v;g)$}
\begin{center}
\scalebox{1}{
\begin{tabular}{ |l||c|c|c|c|c|c|   }
\hline
$\mathcal{T}^{1}_{LETS}(3;g)$  &  $g=$6  & $g=$8    & $g=$10    &$g=$12    & $g=$14     &$g=$16       \\
\hline
\hline
Lower bound  of Theorem~\ref{lowf}   &  5   &   5   &   9    &   11   &  19    &   23        \\
\hline
Known exact  value~\cite{hashemilower}            & 5       &7          & 11      &--          &--         &--    \\
\hline
Upper bound of Theorem~\ref{Th92}   &   5   &   7   &    11    &   15   &   25    &   31        \\
\hline
 \multicolumn{7}{|l|}{$\mathcal{T}^{1}_{LETS}(5;g)$} \\
\hline
Lower bound  of Theorem~\ref{lowf}   &  7   &   9   &   25    &   37   &  101    &   149        \\
\hline
Known exact  value~\cite{hashemilower}           & 7       &13          & --      &--          &--         &--    \\
\hline
Upper bound of Fact~\ref{F1}   &   7   &   --   &    31    &   --   &   --    &   --        \\
\hline
\end{tabular}
}
\end{center}
\label{TableNN2}
\end{table}

\begin{ex}\label{R2}
As we discussed in Remark~\ref{R1}, there is a unique $(10,0)$ LETS structure in the Tanner graphs with girth $10$ and $d_v = 3$ (see Fig. \ref{F88}).
The normal graph of the unique $(10,0)$ LETS structure is the Petersen graph. The Petersen graph is edge transitive.\footnote{An edge-transitive graph is a graph $G$ such that, given any two edges $e_1$ and $e_2$ of $G$, there is an automorphism of $G$ that maps $e_1$ to $e_2$.} Thus, by applying the procedure which was described in Part $(i)(b)$ of Theorem \ref{Th92} to Peterson graph, we obtain a unique $(11,1)$ LETS structure of Tanner graphs with girth $10$ and $d_v = 3$ (see Fig. \ref{F2}).
It is known through computer search that the structure in Fig.~\ref{F2} is the only $(11,1)$ LETS structure in Tanner graphs with girth $10$ and $d_v = 3$.
\end{ex}

\begin{figure}[]
\centering
\includegraphics [width=0.25\textwidth]{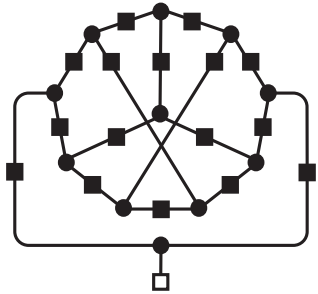}
\caption{The unique (11,1) LETS in the Tanner graphs with girth 10 and $d_v=3$. }
\label{F2}
\end{figure}

\subsubsection{Upper bounds on $\mathcal{T}^{2}_{LETS}(d_v;g)$}\label{subsectionB}

Based on Part $(i)(d)$ of Theorem~\ref{Th92}, we derive an upper bound on $\mathcal{T}^{2}_{LETS}(d_v;g)$, for any $g \geq 6$ and any $d_v \geq 3$. In Table \ref{Table5}, we compare this upper bound with the existing lower bound and exact values for $\mathcal{T}^{2}_{LETS}(d_v;g)$. An inspection of Table~\ref{Table5} reveals that for every case where the exact value of $\mathcal{T}^{2}_{LETS}(d_v;g)$ is known, our upper bound matches the exact value and is thus tight.

\begin{table}[ht]
\caption{Comparison of the upper bound of Part $(i)(d)$ of Theorem~\ref{Th92} with the existing results on $\mathcal{T}^{2}_{LETS}(d_v;g)$}
\begin{center}
\scalebox{1}{
\begin{tabular}{ |l||c|c|c|c|c|c|   }
\hline
$\mathcal{T}^{2}_{LETS}(3;g)$  &  $g=$6  & $g=$8    & $g=$10    &$g=$12    & $g=$14     &$g=$16       \\
\hline
\hline
Lower bound  of Theorem~\ref{lowf}   &  4   &   4   &   6    &   8   &  12    &   16        \\
\hline
Known exact  value~\cite{hashemilower}             & 4       &6          & 10       &--          &--         &--     \\
\hline
Upper bound of Theorem~\ref{Th92}   &   4   &   6   &    10    &   14   &   24    &   30        \\
\hline
 \multicolumn{7}{|l|}{$\mathcal{T}^{2}_{LETS}(4;g)$} \\
\hline
Lower bound  of Theorem~\ref{lowf}   &  4   &   6   &   12    &   18   &  36    &   54        \\
\hline
Known exact  value~\cite{hashemilower}              & 5       &8          & 19       &--          &--         &--    \\
\hline
Upper bound of Theorem~\ref{Th92}   &   5   &   8   &    19    &   26   &   67    &   80        \\
\hline
 \multicolumn{7}{|l|}{$\mathcal{T}^{2}_{LETS}(5;g)$} \\
\hline
Lower bound  of Theorem~\ref{lowf}   &  5   &   8   &   20    &   32   &  80    &   128        \\
\hline
Known exact  value~\cite{hashemilower}             & 6       &10         & --      &--          &--         &--     \\
\hline
Upper bound of Theorem~\ref{Th92}   &   6   &   10   &    30    &   42   &   152    &   170        \\
\hline
 \multicolumn{7}{|l|}{$\mathcal{T}^{2}_{LETS}(6;g)$} \\
\hline
Lower bound  of Theorem~\ref{lowf}   &  7   &   10   &   30    &   50   &  150    &   250        \\
\hline
Known exact  value~\cite{hashemilower}          & 7       &12          & --       &--          &--         &--     \\
\hline
Upper bound of Theorem~\ref{Th92}   &   7   &   12   &    40    &   62   &   294    &   312        \\
\hline
\end{tabular}
}
\end{center}
\label{Table5}
\end{table}

\subsubsection{Upper bounds on $\mathcal{T}^{3}_{LETS}(d_v;g)$}
\label{subsectionC}

We know, based on Lemma~\ref{lem:cannot}, that there are no trapping sets with $b = 3$ in a Tanner graph with an even value of $d_v$. For $d_v = 3$ and $d_v=5$, we use the results derived in Parts $(i)(c)$ and $(i)(b)$ of Theorem~\ref{Th92} to upper bound $\mathcal{T}^{3}_{LETS}(3;g)$ and $\mathcal{T}^{3}_{LETS}(5;g)$, respectively, for any $g \geq 6$.

In Table \ref{Table6}, we compare the upper bounds of Theorem~\ref{Th92} with the existing lower bound and exact results for $\mathcal{T}^{3}_{LETS}(d_v;g)$.
Again, as can be seen in the table, for all the known values of $\mathcal{T}^{3}_{LETS}(d_v;g)$, the derived upper bounds are tight with a large gap to the existing lower bound, particularly for larger girth values.
%As we can see from the table, when $b=3$ and $d_v=5$, the lower bound is far from the actual value.

\begin{table}[ht]
\caption{Comparison of the upper bounds of Theorem~\ref{Th92} with the existing results for $\mathcal{T}^{3}_{LETS}(d_v;g)$}
\begin{center}
\scalebox{1}{
\begin{tabular}{ |l||c|c|c|c|c|c|   }
\hline
$\mathcal{T}^{3}_{LETS}(3;g)$  &  $g=$6  & $g=$8    & $g=$10    &$g=$12    & $g=$14     &$g=$16       \\
\hline
\hline
Lower bound  of Theorem~\ref{lowf}   &  3   &   3   &   5    &   5   &  9    &   9        \\
\hline
Known exact  value~\cite{hashemilower}          & 3      &5          & 9      &--          &--         &--    \\
\hline
Upper bound of Theorem~\ref{Th92}   &   3   &   5   &   9     &   13   &   23    &   29        \\
\hline
 \multicolumn{7}{|l|}{$\mathcal{T}^{3}_{LETS}(5;g)$} \\
\hline
Lower bound  of Theorem~\ref{lowf}   &  5   &   7   &   19    &   27   &  75    &   107        \\
\hline
Known exact  value~\cite{hashemilower}      & 7      &11          & --      &--          &--         &--    \\
\hline
Upper bound of Theorem~\ref{Th92}    &   7   &   11   &   31     &   42   &   153    &   171        \\
\hline
\end{tabular}
}
\end{center}
\label{Table6}
\end{table}

\subsubsection{Upper bounds on $\mathcal{T}^{4}_{LETS}(d_v;g)$}
\label{subsectionA}

First, we derive the following simple lower bound on $\mathcal{T}^{4}_{LETS}(3;g)$.

\begin{lem}
For any $g$, we have $\mathcal{T}^{4}_{LETS}(3;g)\geq 4$.
\label{lem53}
\end{lem}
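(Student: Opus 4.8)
The plan is to argue that a LETS with $b=4$ unsatisfied check nodes in a variable-regular Tanner graph with $d_v=3$ must contain at least four variable nodes, simply because no smaller configuration can simultaneously be leafless, elementary, and have exactly four odd-degree check nodes. First I would recall the defining constraints: in a LETS $S$ with $d_v=3$, every variable node sends three edges to check nodes, all check nodes in $G(S)$ have degree one or two, and every variable node in $S$ is connected to at least two degree-$2$ (satisfied) check nodes. Thus each variable node contributes at most one edge to a degree-$1$ (unsatisfied) check node. Since $|N_o(S)| = b = 4$ and each unsatisfied check node is degree one, there are exactly $4$ edges from $S$ to unsatisfied check nodes, so at least $4$ variable nodes are needed just to supply those edges (each variable node supplies at most one such edge). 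Hence $a = |S| \geq 4$.

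Concretely, I would carry this out in two short steps. Step one: count edges incident to $N_o(S)$. Because $g \geq 6$ forbids parallel edges in the normal graph and all check nodes have degree $\leq 2$, every check node in $N_o(S)$ has degree exactly $1$ in $G(S)$, so the number of $S$-to-$N_o(S)$ edges equals $|N_o(S)| = 4$. Step two: bound the per-node contribution. The leafless condition forces each $v \in S$ to be adjacent to at least two satisfied check nodes; since $d(v) = d_v = 3$, node $v$ is adjacent to at most $3 - 2 = 1$ unsatisfied check node, contributing at most one edge to $N_o(S)$. Combining, $4 = \sum_{v \in S} (\text{edges from } v \text{ to } N_o(S)) \leq |S| \cdot 1 = a$, giving $\mathcal{T}^{4}_{LETS}(3;g) \geq 4$.

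I do not anticipate a genuine obstacle here; the statement is a counting bound and the only subtlety is making sure the "leafless" and "elementary" definitions are invoked correctly so that the per-variable-node bound of one edge to an unsatisfied check is airtight. One should be slightly careful that the bound $\mathcal{T}^{4}_{LETS}(3;g) \geq 4$ is stated for \emph{all} $g$ (including $g=6$), and the argument above uses only $d_v = 3$ and the structural definitions, not the girth, so it holds uniformly. It is worth noting this lemma is primarily a matching lower bound to be compared against the upper bound of Theorem~\ref{Th92}; the interesting content will be the subsequent construction showing $\mathcal{T}^{4}_{LETS}(3;g) = 4$ for all $g$, but that is beyond the statement at hand.
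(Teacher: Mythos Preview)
Your proof is correct and is exactly the argument the paper gives, just spelled out in more detail: unsatisfied check nodes in an ETS have degree one, and the leafless condition together with $d_v=3$ forces each variable node to have at most one unsatisfied neighbor, so four degree-one unsatisfied check nodes require at least four variable nodes. One minor correction to your closing aside: the equality $\mathcal{T}^{4}_{LETS}(3;g)=4$ holds only for $g\leq 8$, not for all $g$; for larger girth the value grows (see Table~\ref{Table8}).
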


\begin{proof}{
This follows from the fact that unsatisfied check nodes in a LETS have degree one and that each variable node of degree three can be connected to at most one unsatisfied check node.
}\end{proof}

For $g\leq 8$, the lower bound of Lemma~\ref{lem53} is tight. To see this, let $G$ be a cycle of length four. Consider the Tanner graph $\widetilde{G^{\frac{1}{2}}}$. Then, for each variable node $v$ in $\widetilde{G^{\frac{1}{2}}}$, add a  check node $c_v$ and connect $v$ to $c_v$.  The resultant graph is a LETS in a variable-regular LDPC code with $d_v=3$ and with $g \leq 8$.

To complement the above result, we use Part $(ii)$ of Theorem~\ref{Th92} and derive an upper bound on $\mathcal{T}^{4}_{LETS}(3;g)$ for any $g \geq 8$. We also use Parts $(i)(c)$ and $(i)(d)$ of Theorem~\ref{Th92} to derive upper bounds on $\mathcal{T}^{4}_{LETS}(4;g)$ and $\mathcal{T}^{4}_{LETS}(d_v;g)$, for $d_v \geq 5$, respectively.

In Table \ref{Table8}, we compare the upper bounds of Theorem~\ref{Th92} with the existing results for $\mathcal{T}^{4}_{LETS}(3;g)$ and $\mathcal{T}^{4}_{LETS}(4;g)$. Note that for the case of $d_v=3$, there is no lower bound presented in \cite{hashemilower}. The comparison, yet again, shows that the derived upper bounds are tight for all the cases where the exact value of $\mathcal{T}^{4}_{LETS}(d_v;g)$ is known and that there is a large gap with the existing lower bound of Theorem~\ref{lowf}.

\begin{table}[ht]
\caption{Comparison of the bounds of Lemma~\ref{lem53} and Theorem~\ref{Th92} with the existing results for $\mathcal{T}^{4}_{LETS}(d_v;g)$}
\begin{center}
\scalebox{1}{
\begin{tabular}{ |l||c|c|c|c|c|c|   }
\hline
$\mathcal{T}^{4}_{LETS}(3;g)$  &  $g=$6  & $g=$8    & $g=$10    &$g=$12    & $g=$14     &$g=$16       \\
\hline
\hline
Lower bound of Lemma~\ref{lem53}  & 4   & 4   & 4    & 4   & 4    & 4        \\
\hline
Known exact  value~\cite{hashemilower}           & 4      &4          & 8      &--          &--         &--    \\
\hline
Upper bound of Theorem~\ref{Th92}   & --   &   4   &   8     &   12   &   22    &  28        \\
\hline
 \multicolumn{7}{|l|}{$\mathcal{T}^{4}_{LETS}(4;g)$} \\
\hline
Lower bound  of Theorem~\ref{lowf}   &  3   &   4   &   7    &   10   &  19    &   28        \\
\hline
Known exact  value~\cite{hashemilower}             & 4      &7          & 18      &--          &--         &--    \\
\hline
Upper bound of Theorem~\ref{Th92}   &   4   &   7   &   18     &   25   &   66    &  79        \\
\hline
\end{tabular}
}
\end{center}
\label{Table8}
\end{table}

\subsubsection{Upper bounds on $\mathcal{T}^{5}_{LETS}(d_v;g)$}

First, we present the following lower bound on $\mathcal{T}^{5}_{LETS}(3;g)$, whose proof is similar to that of Lemma~\ref{lem53}.

\begin{lem}
For any $g$, we have $\mathcal{T}^{5}_{LETS}(3;g)\geq 5$.
\label{lem23}
\end{lem}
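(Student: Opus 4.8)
The plan is to mimic the argument used in Lemma~\ref{lem53}, adapting it to the case $b=5$. The key observation is identical in spirit: in a LETS, every unsatisfied check node has degree one, so it consumes exactly one edge of a variable node; since each variable node has degree $d_v=3$, it can be incident to at most one degree-one check node. Hence if there are $b=5$ unsatisfied check nodes, we need at least $5$ distinct variable nodes to carry them, which immediately gives $a = |S| \geq 5$. First I would state this counting argument cleanly: let $S$ be a LETS with $|N_o(S)|=5$ in a variable-regular Tanner graph with $d_v=3$; map each unsatisfied check node to one of its (unique, up to the choice of the single edge) neighboring variable nodes in $S$; argue this map is injective because a variable node of degree $3$ that is incident to an unsatisfied (degree-one) check node still needs its remaining two edges to go to satisfied checks (by the leafless condition, each variable node in a LETS is connected to at least two satisfied check nodes), so it cannot be incident to a second degree-one check; conclude $|S|\geq 5$.

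The only subtlety — and really the only place any care is needed — is to make sure the injectivity step is airtight given the leafless requirement. A variable node $v\in S$ has $d(v)=3$; if it were incident to two unsatisfied (degree-one) check nodes, it would have only one remaining edge, hence at most one neighboring satisfied check node, contradicting the defining property of a LETS that each variable node is connected to at least two satisfied check nodes in $G(S)$. (In fact even without invoking ``leafless'' explicitly, two degree-one checks plus the structure would leave a single edge, but invoking the LETS definition is the cleanest route.) So the map from $N_o(S)$ to $S$ sending each unsatisfied check to its unique neighbor is well-defined and injective, yielding $5 = |N_o(S)| \leq |S| = a$.

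I expect no real obstacle here; the statement is a direct analogue of Lemma~\ref{lem53} and the excerpt already flags that ``its proof is similar to that of Lemma~\ref{lem53}.'' The proof will essentially read: ``This follows from the fact that unsatisfied check nodes in a LETS have degree one and that each variable node of degree three can be connected to at most one unsatisfied check node.'' If one wanted to also remark on tightness (as is done after Lemma~\ref{lem53} for $b=4$), one could note that for small girth a construction starting from a short cycle, subdivided and with one pendant check added per variable node, realizes $a=5$ with $b=5$ — but that is a complement to the lemma, not part of its proof, so I would keep the proof itself to the one-sentence counting argument.
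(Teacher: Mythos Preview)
Your proposal is correct and follows exactly the same approach as the paper, which simply states that the proof is similar to that of Lemma~\ref{lem53}: unsatisfied check nodes in a LETS have degree one, and each variable node of degree three can be connected to at most one unsatisfied check node. Your more detailed injectivity argument is a faithful unpacking of this one-sentence counting observation.
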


For the case of $g\leq 10$, the lower bound of Lemma~\ref{lem23} is tight. Let $G$ be a cycle of length five. Consider the Tanner graph $\widetilde{G^{\frac{1}{2}}}$, and for each variable node $v$ in $\widetilde{G^{\frac{1}{2}}}$, add a  check node $c_v$ and connect $v$ to $c_v$.  The resultant graph is a LETS with $b=5$ in variable-regular graphs with $d_v=3$ and $g \leq 10$. 

To complement the above result, we use the result of Part $(iii)$ of Theorem~\ref{Th92}, and derive an upper bound on $\mathcal{T}^{5}_{LETS}(3;g)$ for any $g \geq 12$. We also use Part $(i)(c)$ of Theorem~\ref{Th92} to derive an upper bound on $\mathcal{T}^{5}_{LETS}(5;g)$, for any $g \geq 6$.

Table \ref{TableN1} shows the comparison between the results of Lemma~\ref{lem23} and Theorem~\ref{Th92}, on the one hand, and the existing exact values and lower bounds on $\mathcal{T}^{5}_{LETS}(d_v;g)$, on the other hand. Similar conclusions as in those related to previous tables can be drawn here.

\begin{table}[ht]
\caption{Comparison of the bounds of  Lemma~\ref{lem23} and Theorem~\ref{Th92} with the existing results for $\mathcal{T}^{5}_{LETS}(d_v;g)$}
\begin{center}
\scalebox{1}{
\begin{tabular}{ |l||c|c|c|c|c|c|   }
\hline
$\mathcal{T}^{5}_{LETS}(3;g)$  &  $g=$6  & $g=$8    & $g=$10    &$g=$12    & $g=$14     &$g=$16       \\
\hline
\hline
Lower bound of Lemma~\ref{lem23}   &  5   &   5   &   5    &   5   &  5    &   5        \\
\hline
Known exact  value~\cite{hashemilower}             & 5      &5          & 5      &--          &--         &--    \\
\hline
Upper bound of  Theorem~\ref{Th92}   &   --   &   --   &    --     &   13   &  23    &  29        \\
\hline
 \multicolumn{7}{|l|}{$\mathcal{T}^{5}_{LETS}(5;g)$} \\
\hline
Lower bound of Theorem~\ref{lowf}  &  5   &   5   &   13    &   17   &  33    &   65        \\
\hline
Known exact  value~\cite{hashemilower}             & 5      &9          & --      &--          &--         &--    \\
\hline
Upper bound of  Theorem~\ref{Th92}   &   5   &   9   &   29     &   41   &   151    &  169        \\
\hline
\end{tabular}
}
\end{center}
\label{TableN1}
\end{table}

\begin{rem}
Many of the upper bounds derived in this subsection match the exact values obtained in~\cite{hashemilower},~\cite{hashemi2018}. Based on the results of~\cite{hashemi2015new}, for a large number of such cases, there is only a single structure in the corresponding class of LETSs. For such cases, our results clearly identify those unique structures. Examples of such cases include the unique structures of $(4,0)$, $(4,2)$ and $(5,1)$ LETSs in variable-regular LDPC codes with $d_v=3$ and $g=6$. The same applies to the unique LETS structures in $(6,0)$, $(7,1)$ and $(6,2)$ classes of codes with $d_v=3$ and $g=8$, as well as the structures $(5,0)$ and $(5,2)$ for codes with $d_v=4$ and $g=6$. Note that in all the above examples, the class of interest is the one with minimum size $a$ for the given value of $b$. These classes are thus among the most dominant in the error floor region.
\end{rem}

\subsection{Elementary trapping sets with leafs (ETSLs)}
\label{sec4}

Elementary trapping sets with leafs can be partitioned into two categories: (a) Those with no cycle, and (b) those that contain at least one cycle. The two categories are labeled as ETSL$_2$ and ETSL$_1$, respectively, in \cite{hashemilower}.
Trapping sets in  ETSL$_2$ have a tree structure, and only exist for $b \geq d_v$. In such cases, the size of ETSL$_2$ structures is fixed and is equal to $(b-2)/(d_v-2)$~\cite{hashemilower}. In the rest of this subsection, therefore, our focus will be on  ETSL$_1$s.
For simplicity, however, we refer to ETSL$_1$ structures as ETSL.

We note that in variable-regular Tanner graphs there is no ETSL with $b=0$ or $b=1$. Also, for $b\geq 2$, ETSLs exist only if $d_v\leq b+1$ \cite{hashemilower}.
In the following theorem, we find general upper and lower bounds on the minimum size of ETSLs in terms of  the minimum size of LETSs.

\begin{theo}\label{Th25}
For any $g \geq 6$ and any $d_v \geq 3$, we have (i) $\mathcal{T}^{b+d_v-2}_{ETSL}(d_v;g) \leq \mathcal{T}^{b}_{LETS}(d_v;g)+1$, for any $b \geq 1$, and (ii) $\mathcal{T}^{d_v-1}_{ETSL}(d_v;g) \geq \mathcal{T}^{1}_{LETS}(d_v;g)+1$.
\end{theo}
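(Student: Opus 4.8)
The plan is to prove each inequality by a direct construction/modification argument that mirrors the technique already used in Theorem~\ref{Th92}, working on the Tanner-graph level rather than going back to a cage.

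For part (i), I would start from a LETS $S$ of minimum size $\mathcal{T}^{b}_{LETS}(d_v;g)$ realized in some variable-regular Tanner graph $G$ with girth $g$ and variable degree $d_v$; thus $G(S)$ has $b$ unsatisfied (degree-one) check nodes and every variable node in $S$ touches at least two satisfied (degree-two) check nodes. I then append a single new variable node $w$: pick an existing variable node $u\in S$ and a satisfied check node $c$ adjacent to $u$, attach $w$ to $c$ (so $c$ now has degree three — but wait, in an elementary trapping set all checks have degree $\le 2$, so instead I attach $w$ via a fresh degree-two check $c_{w,u}$ shared with $u$, keeping $u$'s leafless property), and give $w$ its remaining $d_v-1$ edges to brand-new degree-one check nodes. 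Adding $w$ this way turns one of $u$'s satisfied checks into a still-satisfied (degree-two) check, adds no new cycle shorter than $g$ (the only cycles through $w$ would have length $\ge g$ since $w$ has a unique neighbor of degree $\ge 2$), creates $d_v-1$ new unsatisfied checks, and leaves $w$ itself with only one satisfied neighbor — so $w$ is a leaf and the resulting subgraph on $S\cup\{w\}$ is an ETSL. The count of unsatisfied checks is $b+(d_v-1)-1 = b+d_v-2$ once we account for the fact that the construction must consume exactly one of the previously-available degree-two attachment points; hence $\mathcal{T}^{b+d_v-2}_{ETSL}(d_v;g)\le \mathcal{T}^{b}_{LETS}(d_v;g)+1$. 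The delicate bookkeeping here — making sure $w$ is genuinely a leaf, that no existing variable node loses its leafless status, and that exactly $b+d_v-2$ unsatisfied checks result — is the step I expect to require the most care.

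For part (ii), I would argue in the contrapositive direction on size. Let $S$ be a minimum-size ETSL with $b=d_v-1$ unsatisfied check nodes, and let $a=\mathcal{T}^{d_v-1}_{ETSL}(d_v;g)$. Since $S$ is an ETSL$_1$ (it contains a cycle), its normal graph has at least one degree-one node; pick a leaf variable node $\ell$ of $G(S)$, i.e.\ one connected to exactly one satisfied degree-two check node $c$ and to $d_v-1$ unsatisfied degree-one checks. Because $b=d_v-1$, \emph{all} the unsatisfied checks of $G(S)$ are precisely these $d_v-1$ checks hanging off $\ell$. Delete $\ell$ together with its incident edges: the $d_v-1$ degree-one checks vanish, the shared check $c$ drops to degree one (becoming the single unsatisfied check of the new subgraph), and every remaining variable node keeps all of its original satisfied checks, so the subgraph on $S\setminus\{\ell\}$ is a LETS (it still contains the cycle, and no variable node has lost a satisfied neighbor) with exactly $b=1$ unsatisfied check. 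Therefore $\mathcal{T}^{1}_{LETS}(d_v;g)\le a-1$, i.e.\ $\mathcal{T}^{d_v-1}_{ETSL}(d_v;g)\ge \mathcal{T}^{1}_{LETS}(d_v;g)+1$.

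The two arguments are essentially inverse operations: part (i) attaches a leaf to a LETS to manufacture an ETSL, and part (ii) removes the (forced) unique leaf of an ETSL to recover a LETS. The main obstacle in both directions is verifying the leafless/leaf conditions are preserved under the modification — in particular, in (ii) one must use the hypothesis $b=d_v-1$ crucially to guarantee that \emph{all} unsatisfied checks are concentrated at a single variable node, so that deleting it yields a genuine LETS rather than another ETSL; and one should also invoke Lemma~\ref{lem:cannot}/the existence conditions for ETSLs ($d_v\le b+1$, so $b=d_v-1$ is the smallest admissible case) to ensure the structures in question actually exist. No girth issue arises in (ii) since deleting a node cannot decrease girth, and in (i) the new node's unique high-degree neighbor prevents short cycles.
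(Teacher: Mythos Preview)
Your part~(ii) is correct and is exactly the paper's argument: locate the (necessarily unique, since $b=d_v-1$) leaf variable node $\ell$, observe that all $d_v-1$ unsatisfied checks hang off $\ell$, delete $\ell$ together with those $d_v-1$ degree-one checks, and what remains is a LETS with a single unsatisfied check (the former degree-two neighbor of $\ell$).

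Part~(i), however, is muddled and does not work as written. Attaching $w$ to an existing variable node $u$ via a \emph{fresh} check $c_{w,u}$ raises the degree of $u$ from $d_v$ to $d_v+1$, so the resulting structure no longer sits inside a variable-regular Tanner graph of degree $d_v$. Moreover, with that construction the unsatisfied-check count would be $b+(d_v-1)$, not $b+d_v-2$; your ``$-1$'' correction (``consume exactly one of the previously-available degree-two attachment points'') has no actual mechanism behind it. The fix is much simpler and uses the hypothesis $b\ge 1$ directly: since $G(S)$ already has a degree-one (unsatisfied) check node $c$, attach the new variable node $w$ to $c$ and to $d_v-1$ brand-new degree-one checks. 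Then $c$ becomes degree two (satisfied), no existing variable node's degree changes, $w$ has exactly one satisfied neighbor (hence is a leaf), and the unsatisfied count is $b-1+(d_v-1)=b+d_v-2$. This is precisely the paper's construction; your inverse-operation intuition for~(ii) was right, and applying it in reverse --- reattaching a leaf to an \emph{unsatisfied} check, not a satisfied one --- gives the clean argument for~(i).
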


\begin{proof}{
$(i)$ Let $S$ be a LETS with $b\geq 1$ in a variable-regular LDPC code. The induced subgraph of $S$, $G(S)$, has at least one degree-1 check node. Call it $c$. Add a variable node $v$, $d_v-1$ check nodes $c_1,\ldots, c_{d_v-1}$ and the edges $cv, c_1v,\ldots, c_{d_v-1}v$ to the trapping set. The resultant graph is an ETSL with $b+d_v-2$ unsatisfied check nodes.

$(ii)$ Let $S$ be an ETSL with $b=d_v-1$ in a variable-regular LDPC code with variable degree $d_v$ and grith $g$. Since the normal graph of $G(S)$ has a leaf, there is a variable node such as $v$ in $S$ such that the degree of only one of its neighboring check nodes is two. (Since $b=d_v-1$, this variable node is unique.)
Call that check node $c_1$. Assume that $\{c_1,\ldots,c_{d_v}\}$ is the set of check nodes that are adjacent to the variable node $v$. Remove the nodes $v, c_2, \ldots, c_{d_v}$ and their incident edges form $G(S)$. The resultant graph
is a LETS with $b=1$ and girth at least $g$. 
}\end{proof}

Combining the upper and the lower bounds of Theorem~\ref{Th25} for $b=1$, we obtain the following result. (Note that a variable-regular graph with even $d_v$ cannot have a TS with odd value of $b$.)

\begin{cor}
For any $g \geq 6$ and any odd $d_v \geq 3$, we have $\mathcal{T}^{d_v-1}_{ETSL}(d_v;g) = \mathcal{T}^{1}_{LETS}(d_v;g)+1$.
\label{cor43}
\end{cor}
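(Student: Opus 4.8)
The plan is to obtain the identity by sandwiching $\mathcal{T}^{d_v-1}_{ETSL}(d_v;g)$ between a matching upper bound and lower bound, both of which are already available from Theorem~\ref{Th25}. Concretely, I would first specialize part $(i)$ of Theorem~\ref{Th25} to $b=1$ (admissible since $1\ge 1$), which gives $\mathcal{T}^{d_v-1}_{ETSL}(d_v;g)\le \mathcal{T}^{1}_{LETS}(d_v;g)+1$: the construction there starts from a minimum-size LETS with $b=1$, attaches one new variable node of degree $d_v$ to the degree-$1$ check node together with $d_v-1$ fresh degree-$1$ check nodes, and produces an ETSL with exactly $1+d_v-2=d_v-1$ unsatisfied check nodes and one extra variable node. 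Then part $(ii)$ of Theorem~\ref{Th25} supplies the reverse inequality $\mathcal{T}^{d_v-1}_{ETSL}(d_v;g)\ge \mathcal{T}^{1}_{LETS}(d_v;g)+1$, obtained by taking a minimum-size ETSL with $b=d_v-1$, deleting the (unique, since $b=d_v-1$) variable node whose single degree-$2$ check neighbour forces the leaf along with its other $d_v-1$ check neighbours, and noting that the result is a LETS with $b=1$, hence of size at least $\mathcal{T}^{1}_{LETS}(d_v;g)$. Chaining the two inequalities yields equality.

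The one point deserving a brief remark is why the statement is restricted to odd $d_v\ge 3$, which is exactly the content of the parenthetical note: by Lemma~\ref{lem:cannot}, a variable-regular Tanner graph with even $d_v$ admits no trapping set with odd $b$, so for even $d_v$ the quantity $\mathcal{T}^{1}_{LETS}(d_v;g)$ is undefined and the claimed identity would be vacuous. For odd $d_v$ both sides are meaningful and finite: $d_v-1$ is even and satisfies $d_v\le (d_v-1)+1$, so ETSLs with $b=d_v-1\ge 2$ exist and the left-hand side is bounded above by the explicit construction of part $(i)$ of Theorem~\ref{Th25}; the right-hand side is finite by the upper bounds of Subsection~\ref{sec3} (e.g.\ Theorem~\ref{Th92}$(i)(b)$ for $d_v=3$ and Fact~\ref{F1} for $d_v=5$).

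There is essentially no hard step here: the corollary is an immediate consequence of combining the two halves of Theorem~\ref{Th25} at $b=1$. The only item to be careful about is purely notational bookkeeping — confirming that the attach-a-variable-node construction of part $(i)$ indeed changes the number of unsatisfied check nodes from $b$ to $b+d_v-2$, and that the delete-a-variable-node argument of part $(ii)$ lands precisely in the class $b=1$ — but both were already verified inside the proof of Theorem~\ref{Th25}, so nothing new is needed.
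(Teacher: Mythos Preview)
Your proposal is correct and follows exactly the paper's approach: the corollary is obtained by combining parts $(i)$ and $(ii)$ of Theorem~\ref{Th25} at $b=1$, with the oddness of $d_v$ ensuring via Lemma~\ref{lem:cannot} that $\mathcal{T}^{1}_{LETS}(d_v;g)$ is well-defined.
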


\begin{rem}{
We note that the general lower bound of (\ref{ineq1}) for ETSs is also applicable to ETSLs. Using the result of Corollary~\ref{cor43}, however, we can improve the lower bound for ETSLs, i.e., we can use (\ref{ineq1}) to obtain a lower bound on $\mathcal{T}^{1}_{LETS}(d_v;g)$, and then use Corollary~\ref{cor43} to derive a lower bound on $\mathcal{T}^{d_v-1}_{ETSL}(d_v;g)$. As an example, the lower bound of (\ref{ineq1}) for $\mathcal{T}^{2}_{ETSL}(3;10)$ is $6$. On the other hand, the lower bound of (\ref{ineq1}) for $\mathcal{T}^{1}_{LETS}(3;10)$ is $9$. Using Corollary~\ref{cor43}, this translates to the improved lower bound of $10$ for $\mathcal{T}^{2}_{ETSL}(3;10)$.
These improved lower bounds for $d_v=3, 5$, and different girth values are presented in Table \ref{Table15}.
}
\label{rem12}
\end{rem}

\begin{ex}
In Example~\ref{R2}, we demonstrated that there is a unique $(11,1)$ LETS structure in Tanner graphs with girth 10 and $d_v = 3$. By the procedure presented in Theorem \ref{Th25}, we find a unique $(12,2)$ ETSL structure in Tanner graphs with girth 10 and $d_v = 3$. This ETSL is shown in Fig. \ref{F3}.
\end{ex}

\begin{figure}[]
\centering
\includegraphics [width=0.25\textwidth]{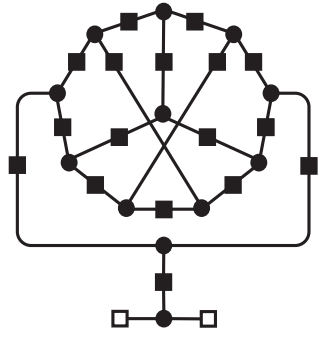}
\caption{The unique $(12,2)$ ETSL structure in Tanner graphs with girth $10$ and $d_v=3$.}
\label{F3}
\end{figure}

In the following corollary, using Part $(i)$ of Theorem~\ref{Th25} and the results of Subsection~\ref{sec3}, we provide upper bounds on the smallest size of ETSLs in terms of cage sizes, for different $b$ values.

\begin{cor}
For any $g \geq 6$, we have (i) $\mathcal{T}^{2}_{ETSL}(3;g)\leq n(3;\frac{g}{2})+2$, (ii) $\mathcal{T}^{3}_{ETSL}(3;g)\leq n(3;\frac{g}{2})+1$, (iii) $\mathcal{T}^{4}_{ETSL}(3;g)\leq n(3;\frac{g}{2})$, $\mathcal{T}^{4}_{ETSL}(4;g)\leq n(4;\frac{g}{2})+1$, and $\mathcal{T}^{4}_{ETSL}(5;g) \leq n'(5,6;\frac{g}{2}) + 1$. Moreover, (iv)(a) for any $g \geq 8$, we have $\mathcal{T}^{5}_{ETSL}(3;g)\leq n(3;\frac{g}{2})-1$, and for $g=6$, we have $\mathcal{T}^{5}_{ETSL}(3;6)\leq 5$; and (iv)(b) for any $g\geq 6$, we have $\mathcal{T}^{5}_{ETSL}(5;g)\leq n(5;\frac{g}{2})+1$.
\label{cor19}
\end{cor}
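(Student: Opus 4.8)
The plan is to combine Part $(i)$ of Theorem~\ref{Th25} — namely $\mathcal{T}^{b+d_v-2}_{ETSL}(d_v;g) \leq \mathcal{T}^{b}_{LETS}(d_v;g)+1$ — with the upper bounds on $\mathcal{T}^{b}_{LETS}(d_v;g)$ established in Subsection~\ref{sec3} (via Theorem~\ref{Th92} and Fact~\ref{F1}), simply matching each desired ETSL $b$-value to the LETS $b$-value that feeds into it. Concretely, for $d_v=3$ we have $d_v-2=1$, so an ETSL with $b$ unsatisfied checks comes from a LETS with $b-1$ unsatisfied checks; for $d_v=4$ we have $d_v-2=2$; and for $d_v=5$ we have $d_v-2=3$. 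I would go through parts $(i)$–$(iv)$ in order, each time naming the LETS class to start from, invoking the corresponding part of Theorem~\ref{Th92} (or Fact~\ref{F1}), and adding $1$.

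In detail: for $(i)$, $\mathcal{T}^{2}_{ETSL}(3;g) \leq \mathcal{T}^{1}_{LETS}(3;g)+1$, and by Part $(i)(b)$ of Theorem~\ref{Th92}, $\mathcal{T}^{1}_{LETS}(3;g) \leq n(3;\frac{g}{2})+1$, giving $n(3;\frac{g}{2})+2$. For $(ii)$, $\mathcal{T}^{3}_{ETSL}(3;g) \leq \mathcal{T}^{2}_{LETS}(3;g)+1$; by Part $(i)(d)$ of Theorem~\ref{Th92} (with $b=2<d_v+2=5$), $\mathcal{T}^{2}_{LETS}(3;g) \leq n(3;\frac{g}{2})$, yielding $n(3;\frac{g}{2})+1$. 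For $(iii)$: $\mathcal{T}^{4}_{ETSL}(3;g) \leq \mathcal{T}^{3}_{LETS}(3;g)+1 \leq (n(3;\frac{g}{2})-1)+1 = n(3;\frac{g}{2})$ using Part $(i)(c)$ of Theorem~\ref{Th92} with $d_v=3$; $\mathcal{T}^{4}_{ETSL}(4;g) \leq \mathcal{T}^{2}_{LETS}(4;g)+1 \leq n(4;\frac{g}{2})+1$ using Part $(i)(d)$ (with $b=2<6$); and $\mathcal{T}^{4}_{ETSL}(5;g) \leq \mathcal{T}^{1}_{LETS}(5;g)+1 \leq n'(5,6;\frac{g}{2})+1$ using Fact~\ref{F1}. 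For $(iv)(a)$: $\mathcal{T}^{5}_{ETSL}(3;g) \leq \mathcal{T}^{4}_{LETS}(3;g)+1$; for $g\geq 8$, Part $(ii)$ of Theorem~\ref{Th92} gives $\mathcal{T}^{4}_{LETS}(3;g) \leq \mathcal{T}^{2(d_v-1)}_{LETS}(3;g) = n(3;\frac{g}{2})-2$ (since $2(d_v-1)=4$ for $d_v=3$), hence $n(3;\frac{g}{2})-1$; for $g=6$, one instead appeals directly to the $(5,5)$ LETS construction noted after Lemma~\ref{lem53}/\ref{lem23} (the subdivided $4$-cycle with a pendant check on each variable node gives a $(4,4)$ LETS, from which Theorem~\ref{Th25}$(i)$ produces a $(5,5)$ ETSL, matching the claimed bound $5$). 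For $(iv)(b)$: $\mathcal{T}^{5}_{ETSL}(5;g) \leq \mathcal{T}^{3}_{LETS}(5;g)+1$, and $\mathcal{T}^{3}_{LETS}(5;g)$ is bounded via Part $(i)(c)$ of Theorem~\ref{Th92} ($d_v=5$, $b=d_v$), which gives $n(5;\frac{g}{2})-1$, hence $n(5;\frac{g}{2})$... here I would double-check: the stated bound is $n(5;\frac{g}{2})+1$, so in fact the relevant feeder is likely $\mathcal{T}^{3}_{LETS}(5;g) \leq n(5;\frac{g}{2})$ via a different route, or the intent is simply the loose bound obtained by removing a variable node and adding pendant checks; I would reconcile the exact constant against the table in Subsection~\ref{subsectionC} before finalizing.

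The main obstacle is not conceptual — each step is a one-line substitution — but bookkeeping: making sure that, for every triple $(d_v,b,g)$ appearing in the statement, the precondition of the invoked part of Theorem~\ref{Th92} is actually met (the parity and range constraints $b<d_v+2$, $b$ even, $g\geq 8$ for Part $(ii)$, $d_v\geq 4$ or $g\geq 10$ for Part $(iii)$, existence of an $(5,6;\frac{g}{2})$-good-graph for Fact~\ref{F1}, etc.), and that the $g=6$ corner cases for $d_v=3$, $b=5$ are handled separately since the generic LETS feeder bounds require larger girth there. I would also verify that the ETSL produced by Theorem~\ref{Th25}$(i)$ genuinely lands in the ETSL$_1$ category (contains a cycle), which holds because the LETS we start from already contains a cycle and the single appended degree-$1$-check-rooted variable node does not destroy it.
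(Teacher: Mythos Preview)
Your approach is correct and is exactly what the paper does: it states the corollary as an immediate consequence of Part~$(i)$ of Theorem~\ref{Th25} combined with the LETS bounds of Subsection~\ref{sec3}. Your case-by-case matching of feeders is right for parts $(i)$--$(iii)$ and $(iv)(a)$.

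The only slip is in part $(iv)(b)$. For $d_v=5$ one has $d_v-2=3$, so Theorem~\ref{Th25}$(i)$ reads $\mathcal{T}^{b+3}_{ETSL}(5;g)\le \mathcal{T}^{b}_{LETS}(5;g)+1$; to reach the ETSL value $b+3=5$ you need the LETS feeder with $b=2$, not $b=3$. Then Part~$(i)(d)$ of Theorem~\ref{Th92} (with $b=2$ even and $2<d_v+2=7$) gives $\mathcal{T}^{2}_{LETS}(5;g)\le n(5;\tfrac{g}{2})$, and hence $\mathcal{T}^{5}_{ETSL}(5;g)\le n(5;\tfrac{g}{2})+1$, which is exactly the stated bound --- no reconciliation against tables is needed. (Your attempted route via $\mathcal{T}^{3}_{LETS}(5;g)$ and Part~$(i)(c)$ was doubly off: the feeder index is wrong, and Part~$(i)(c)$ bounds $\mathcal{T}^{d_v}_{LETS}$, i.e., $b=5$, not $b=3$.) Your remark that the resulting ETSL lies in the ETSL$_1$ category because the underlying LETS already contains a cycle is a valid observation.
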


\begin{table}[ht]
\caption{Comparison of the results of this work with existing results for $\mathcal{T}^{b}_{ETSL}(d_v;g)$}
\begin{center}
\scalebox{1}{
\begin{tabular}{ |l||c|c|c|c|c|c|   }
\hline
$\mathcal{T}^{2}_{ETSL}(3;g)$  &  $g=$6  & $g=$8    & $g=$10    &$g=$12    & $g=$14     &$g=$16       \\
\hline
Lower bound of Theorem~\ref{lowf}  &  4   &   4   &  6    &   8   &  12    &  16        \\
\hline
New Lower bound (Remark~\ref{rem12})  &   6   &   6   &   10     &   12   &   20    &  24        \\
\hline
Known exact  value~\cite{hashemilower}  &6 & 8 & 12 & -- & -- & --\\
\hline
Upper bound of Corollary~\ref{cor19}   &   6   &   8   &    12    &   16   &   26    &   32        \\
\hline
\multicolumn{7}{|l|}{$\mathcal{T}^{3}_{ETSL}(3;g)$} \\
\hline
Lower bound of Theorem~\ref{lowf}   &  3   &   3   &  5    &   5   &  9    &  9        \\
\hline
Known exact  value~\cite{hashemilower}             & 5      &7          & 11      &--          &--         &--    \\
\hline
Upper bound of Corollary~\ref{cor19}    &   5   &   7   &   11     &   15   &   25    &  31        \\
\hline
\multicolumn{7}{|l|}{$\mathcal{T}^{4}_{ETSL}(3;g)$} \\
\hline
Known exact  value~\cite{hashemilower}             & 4     &6          & 10      &--          &--         &--    \\
\hline
Upper bound of Corollary~\ref{cor19}   &   4   &   6   &   10     &   14   &   24    &  30        \\
\hline
\multicolumn{7}{|l|}{$\mathcal{T}^{4}_{ETSL}(4;g)$ } \\
\hline
Lower bound of Theorem~\ref{lowf}   &  3   &   4   &  7    &   10   &  19   &  28        \\
\hline
Known exact  value~\cite{hashemilower}             & 6    & 9         & 20      &--          &--         &--    \\
\hline
Upper bound of Corollary~\ref{cor19}   &   6   &   9   &   20     &   27   &   68    &  81        \\
\hline
\multicolumn{7}{|l|}{$\mathcal{T}^{4}_{ETSL}(5;g)$ } \\
\hline
Lower bound of Theorem~\ref{lowf}    &  4   &  6   &  14    &   22   &  54    &  86        \\
\hline
New Lower bound (Remark~\ref{rem12})  &   8   &   10   &   26     &   38   &   102    &  150        \\
\hline
Known exact  value ~\cite{hashemilower}  &8 & 14 & -- & -- & -- & --\\
\hline
Upper bound of Corollary~\ref{cor19}   &   8   &   --   &    32    &   --   &   --    &   --        \\
\hline
\multicolumn{7}{|l|}{$\mathcal{T}^{5}_{ETSL}(3;g)$} \\
\hline
Known exact  value~\cite{hashemilower}       & 5    & 5         & 9      &--          &--         &--    \\
\hline
Upper bound of Corollary~\ref{cor19}   &   5   &   5   &   9     &   13   &   23    &  29        \\
\hline
\multicolumn{7}{|l|}{$\mathcal{T}^{5}_{ETSL}(5;g)$} \\
\hline
Lower bound of Theorem~\ref{lowf}   &  5   &   5   &  13    &   17   &   33    &  65        \\
\hline
Known exact  value~\cite{hashemilower}       & 7    & 11         & --      &--          &--         &--    \\
\hline
Upper bound of Corollary~\ref{cor19}   &   7   &   11   &   31     &   43   &   153    &   171        \\
\hline
\end{tabular}
}
\end{center}
\label{Table15}
\end{table}

We have presented the upper bounds derived in Corollary~\ref{cor19} on $\mathcal{T}^{b}_{ETSL}(d_v;g)$ for different values of $b$, $d_v$ and $g$ in Table~\ref{Table15}.
As can be seen, these bounds in every case where the exact value of $\mathcal{T}^{b}_{ETSL}(d_v;g)$ is known, are equal to the exact value.
Table~\ref{Table15} also shows that there is a large gap between the derived upper bounds and the existing lower bound of Theorem~\ref{lowf} (wherever this bound is available). This is particularly the case for larger $g$ values. This implies that the lower bound of Theorem~\ref{lowf} is rather loose.

\subsection{Non-elementary trapping sets (NETSs)}
\label{sec5}

In the following, we first derive general upper bounds on $\mathcal{T}^{b}_{NETS}(d_v;g)$ for different values of $g \geq 6$, $d_v \geq 3$ and $b \geq 0$. We then improve some of these bounds for specific values of $g$, $d_v$ or $b$.

\begin{theo}\label{Th90}
(i) For any $g \geq 6$ and any $d_v \geq 3$,  we have (a) $\mathcal{T}^{0}_{NETS}(d_v;g)\leq 2n(d_v;\frac{g}{2})$, and (b) $\mathcal{T}^{0}_{NETS}(d_v;g)\leq n(d_v;\frac{g+4}{2})-2$.

(ii) For any $g \geq 6$, we have $\mathcal{T}^{b}_{NETS}(d_v;g)\leq n(d_v;\frac{g+2}{2})-1$, for any even value of $d_v \geq 4$ and any even value of $0 \leq b \leq d_v-2$, or for any odd value of $d_v \geq 3$ and any odd value of $1 \leq b \leq d_v - 2$.

(iii) For any $g \geq 6$,  we have (a) $\mathcal{T}^{2(d_v-1)}_{NETS}(d_v;g)\leq n(d_v;\frac{g }{2})+2$, for any $d_v \geq 3$, (b) $\mathcal{T}^{d_v}_{NETS}(d_v;g)\leq n(d_v;\frac{g }{2})+1$, for any $d_v \geq 3$, (c) $\mathcal{T}^{d_v+2}_{NETS}(d_v;g)\leq n(d_v;\frac{g }{2})+1$, for any $d_v \geq 3$, and (d) $\mathcal{T}^{d_v-1}_{NETS}(d_v;g)\leq n(d_v;\frac{g+2}{2})$, for any odd value $d_v \geq 3$.
\end{theo}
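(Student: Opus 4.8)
The plan is to reuse the ``cage $\to$ subdivision $\to$ Tanner graph $\to$ local surgery'' recipe already exploited for Theorem~\ref{Th92}. In each item I start from a $(d_v;g')$-cage $H$, where $g'$ is taken to be $\frac{g}{2}$, $\frac{g+2}{2}$ or $\frac{g+4}{2}$ according to how much ``girth slack'' the intended modification will consume, pass to the Tanner graph $\widetilde{H^{\frac{1}{2}}}$ (which for $g'=\frac{g}{2}$ is the $(n(d_v;\frac{g}{2}),0)$ LETS of part~(i)(a) of Theorem~\ref{Th92}), and then perform a few local edits designed to (a) create at least one check node of degree $\ge 3$, so the configuration is non-elementary, (b) make exactly $b$ check nodes odd-degree, and (c) keep every variable node of degree $d_v$. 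Throughout, I use the standard fact that every finite bipartite graph obtained this way, with all variable nodes of degree $d_v$ and girth $\ge g$, occurs as the induced subgraph $G(S)$ of a full variable-regular Tanner graph of girth $g$, so each construction certifies the stated bound.

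For part~(i)(a), I take two vertex-disjoint copies of $\widetilde{H^{\frac{1}{2}}}$ for a $(d_v;\frac{g}{2})$-cage $H$ and identify a degree-$2$ check of the first copy with a degree-$2$ check of the second into a single degree-$4$ check. The result has $2n(d_v;\frac{g}{2})$ variable nodes, only even-degree checks ($b=0$), a degree-$4$ check, and no new cycle (a cycle through the merged check would have to stay inside one copy). For part~(i)(b), I instead start from a $(d_v;\frac{g+4}{2})$-cage $H$ and delete two variable nodes $v_1,v_2$, chosen adjacent in $H$ when $d_v$ is odd and non-adjacent when $d_v$ is even, so that each deleted node leaves an \emph{even} number of degree-$1$ check nodes; I then merge these degree-$1$ checks, four of them into one degree-$4$ check and the rest into degree-$2$ checks, bringing $b$ back to $0$ while keeping a degree-$4$ check. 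The count becomes $n(d_v;\frac{g+4}{2})-2$. This is the delicate case: besides the cycles created by merges within one deleted neighbourhood, one must also control the cycles running between the two neighbourhoods, and it is exactly the extra girth of the $\frac{g+4}{2}$-cage together with the adjacency choice for $v_1,v_2$ (most stringently when $d_v=3$, where a cross-merge is unavoidable) that keeps all of them of length $\ge g$.

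For part~(ii), I use a $(d_v;\frac{g+2}{2})$-cage, delete one variable node $v$ (leaving $d_v$ degree-$1$ check nodes), and merge these $d_v$ checks in groups: one group of size $3$ or $4$ (to manufacture a degree-$3$ or degree-$4$ check) together with an appropriate number of pairs. A short case analysis on the parities of $d_v$ and $b$ and on the boundary values $b=0$, $b=d_v-2$ (and $d_v=4$) shows that one can always make exactly $b$ odd-degree checks survive while using no more than the $d_v$ available checks, and the resulting structure has size $n(d_v;\frac{g+2}{2})-1$. For part~(iii) I work from a $(d_v;\frac{g}{2})$-cage in (a),(b),(c) and a $(d_v;\frac{g+2}{2})$-cage in (d): in (a) I add two variable nodes, each joined to a common existing degree-$2$ check (thereby raised to degree $4$) and to $d_v-1$ new pendant degree-$1$ checks, giving $b=2(d_v-1)$; in (b) I add one variable node joined to an existing degree-$2$ check (raised to degree $3$) and to $d_v-1$ pendants, giving $b=d_v$; in (c) I run the (b) construction and additionally split one degree-$2$ check into two pendant degree-$1$ checks, adding $2$ to $b$; in (d) I keep the vertex count fixed, detaching $d_v-1$ of the edges at a variable node $v$, fusing the $d_v-1$ freed checks into a single check of degree $d_v-1$, reconnecting $v$ to that fused check (now of degree $d_v$), and giving $v$ the remaining $d_v-2$ edges as pendant degree-$1$ checks, which yields $b=d_v-1$ and size $n(d_v;\frac{g+2}{2})$. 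Each construction in (iii) only adds pendant edges at the new/modified variable nodes or splits a check, so no new cycles arise in (a),(b),(c), while in (d) the new cycles all pass through the fused check and are treated as below.

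The bookkeeping of variable degrees and of $b$ is routine; the genuine obstacle, the same one present throughout Theorem~\ref{Th92}, is verifying that no edit has created a cycle shorter than $g$. The key quantitative input is that in a $(d_v;g')$-cage $H$ any two neighbours of a vertex $v$ are at distance $\ge g'-2$ in $H-v$ (otherwise $H$ would contain a cycle shorter than $g'$), which after subdivision becomes distance $\ge 2g'-4$ in $\widetilde{H^{\frac{1}{2}}}$. Since a check created by fusing degree-$1$ checks adds $2$ to the length of any cycle through it, a delete-then-merge surgery is safe exactly when $2g'-4+2\ge g$, that is when $g'\ge\frac{g+2}{2}$; this is why parts~(ii), (i)(b) and (iii)(d) need the $\frac{g+2}{2}$- or $\frac{g+4}{2}$-cage, whereas the bounds built on the plain $\frac{g}{2}$-cage (parts (i)(a) and (iii)(a)--(c)) are arranged so that neighbours of a deleted vertex are never merged. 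Matching $g'$ to the number of ``shortcut'' edges each construction introduces is precisely what makes every bound come out equal to the relevant cage size, up to the small additive constant stated, and the two places where this matching is tightest, namely part~(i)(b) and the parity boundary cases of part~(ii), are where the bulk of the care will be required.
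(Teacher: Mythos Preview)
Your plan is correct and follows essentially the same ``cage $\to$ subdivision $\to$ local surgery'' route as the paper, with the same girth accounting (neighbours of a deleted vertex are at distance $\ge g'-2$ in $H-v$). The main tactical difference is that in (i)(b) and (ii) the paper merges \emph{all} of the relevant degree-$1$ checks into a \emph{single} check (of degree $2(d_v-1)$ in (i)(b), of degree $d_v-b+1$ in (ii)), which makes the parity bookkeeping and the girth check immediate and avoids your adjacent/non-adjacent case split and your ``one degree-$3$/$4$ plus pairs'' scheme; conversely, your (iii)(c) (run (b), then split a degree-$2$ check) is cleaner than the paper's more elaborate two-vertex replacement, and your (iii)(d) is a harmless variant (fused check of degree $d_v$ rather than $d_v+1$).
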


\begin{proof}{
$(i)(a)$ Let $G$ be a $(d_v;\frac{g }{2})$-graph. Consider the Tanner graph $\widetilde{G^{\frac{1}{2}}}$. Let $v_1,v_2$ be two variable nodes in $\widetilde{G^{\frac{1}{2}}}$, and  $c $ be a check node such that $cv_1,cv_2 \in E(\widetilde{G^{\frac{1}{2}}})$. Consider the union of two copies of the graph  $\widetilde{G^{\frac{1}{2}}}$. Remove node $c$ and its incident edges from both copies of  $\widetilde{G^{\frac{1}{2}}}$. Add a check node $c'$ to the graph and connect it the variable nodes  $v_1,v_2$ in both copies of $\widetilde{G^{\frac{1}{2}}}$. The resultant Tanner graph is a NETS with $b=0$ in a variable-regular LDPC code with variable degree $d_v$ and girth at least $g$.

$(i)(b)$ Let $G$ be a $(d_v;\frac{g+4}{2})$-graph. Consider the Tanner graph $\widetilde{G^{\frac{1}{2}}}$. Let $v_1,v_2$ be two variable nodes in $\widetilde{G^{\frac{1}{2}}}$, and  $c$ be  a check node such that $cv_1,cv_2 \in E(\widetilde{G^{\frac{1}{2}}})$. Since the girth of $\widetilde{G^{\frac{1}{2}}}$ is at least ten, there are $2(d_v-1)$ distinct variable nodes $v'_1,\ldots,v'_{d_v-1},v''_1,\ldots,v''_{d_v-1}$ in $\widetilde{G^{\frac{1}{2}}}$ such that $v_1$ has a common neighbor $c'_j$ with $v'_j$, for every $1 \leq j \leq d_v-1$, and $v_2$ has a common neighbor $c''_j$ with $v''_j$,  for every $1 \leq j \leq d_v-1$. Remove the nodes $v_1,v_2, c, c'_1, \ldots, c'_{d_v-1}, c''_1, \ldots, c''_{d_v-1}$, and their incident edges from $\widetilde{G^{\frac{1}{2}}}$. Next, add a check node $c_1$ and the edges $v'_1c_1,\ldots, v'_{d_v-1}c_1,v''_1c_1, \ldots, v''_{d_v-1}c_1$ to the graph. The resultant Tanner graph is a NETS with $b=0$ in a variable-regular LDPC code with variable degree $d_v$ and girth at least $g$.

$(ii)$ Let $G$ be a $(d_v;\frac{g+2}{2})$-graph. Consider the Tanner graph $\widetilde{G^{\frac{1}{2}}}$. Let $v,v_1,\ldots,v_{d_v}$ be $d_v+1$ variable nodes  in $\widetilde{G^{\frac{1}{2}}}$,  such that $v$ has a common neighbor $c_i$ with $v_i$, $i=1,\ldots,d_v$. Let $b > 0$. Remove the nodes $v,c_1, \ldots, c_{d_v-b+1}$, and their incident edges from $\widetilde{G^{\frac{1}{2}}}$. Then, add a check node $c$ and the edges $v_1c, \ldots, v_{d_v-b+1}c$ to the graph. The resultant Tanner graph is a NETS in a variable-regular LDPC code with variable degree $d_v$ and girth at least $g$, in which all check nodes have degree two, except $b-1$ of them that have degree-$1$, and one that has the odd degree $d_v-b+1 \geq 3$. For the case of $b=0$ and $d_v$ even, remove the nodes $v,c_1, \ldots, c_{d_v}$, and their incident edges from $\widetilde{G^{\frac{1}{2}}}$. Then, add a check node $c$ and the edges $v_1c, \ldots, v_{d_v}c$ to the graph. The resultant Tanner graph is a NETS in a variable-regular LDPC code with variable degree $d_v$ and girth at least $g$, in which all check nodes have degree two, except $c$ that has an even degree $d_v \geq 4$ ($b=0$).

$(iii)(a)$ Let $G$ be a $(d_v;\frac{g }{2})$-graph. Consider the Tanner graph $\widetilde{G^{\frac{1}{2}}}$, and let $c$ be a check node in $\widetilde{G^{\frac{1}{2}}}$.  Add two variable nodes
$v_1,v_2 $, check nodes $c'_1, \ldots, c'_{d_v-1}, c''_1, \ldots, c''_{d_v-1}$, and edges $v_1c'_1, \ldots, v_1c'_{d_v-1}, v_2c''_1, \ldots, v_2c''_{d_v-1}, v_1c,v_2c $ to $\widetilde{G^{\frac{1}{2}}}$. The resultant Tanner graph is a NETS with $b= 2(d_v-1)$ in a variable-regular LDPC code with variable degree $d_v$ and girth $g$. (It has a check node of degree four and $2(d_v-1)$ degree-$1$ check nodes. The rest of the check nodes have degree two.)

$(iii)(b)$ Let $G$ be a $(d_v;\frac{g }{2})$-graph. Consider the Tanner graph $\widetilde{G^{\frac{1}{2}}}$, and let $c$ be a check node in $\widetilde{G^{\frac{1}{2}}}$. Add a  variable node  $v$, check nodes $c_1, \ldots, c_{d_v-1}$ and
the edges $vc_1, \ldots ,vc_{d_v-1}, vc $ to  $\widetilde{G^{\frac{1}{2}}}$. The resultant Tanner graph is a NETS with $b= d_v$ in a variable-regular LDPC code with variable degree $d_v$ and girth $g$.

$(iii)(c)$ Let $G$ be a $(d_v;\frac{g}{2})$-graph. Consider the Tanner graph $\widetilde{G^{\frac{1}{2}}}$. Let $v, v_1, \ldots, v_{d_v}$ be $d_v+1$ variable nodes  in $\widetilde{G^{\frac{1}{2}}}$, such that for each $1\leq i \leq d_v$, $v$ has a common neighbor $c_i$ with $v_i$. Remove the node $v$ and its incident edges from $\widetilde{G^{\frac{1}{2}}}$. Add variable nodes $v', v''$, check nodes  $c'_1, \ldots, c'_{d_v-1}, c''_1$, and edges $v'c'_1, \ldots, v'c'_{d_v-1}, v'c_1, v''c_1, v''c''_1, v''c_2, \ldots, v''c_{d_v-1}$ to the graph. The resultant Tanner graph is a NETS with $b=d_v+2$ in a variable-regular LDPC code with variable degree $d_v$ and girth at least $g$. (Note that $c'_1, \ldots, c'_{d_v-1}, c''_1, c_{d_v}$, all have degree one, $c_1$ has degree three, and the rest of the check nodes have degree two.)

$(iii)(d)$ Let $G$ be a $(d_v;\frac{g+2}{2})$-graph. Consider the Tanner graph $\widetilde{G^{\frac{1}{2}}}$. Let $v, v_1, \ldots, v_{d_v}$ be $d_v+1$ variable nodes  in $\widetilde{G^{\frac{1}{2}}}$, such that $v$ has a common neighbor $c_i$ with each node $v_i$, $i=1, \ldots, d_v$. Remove nodes $v, c_1, \ldots, c_{d_v}$ and their incident edges from $\widetilde{G^{\frac{1}{2}}}$. Then, add a variable node $v'$, check nodes $c'_1, \ldots, c'_{d_v}$ and the edges $v_1c'_1 , \ldots,v_{d_v}c'_1, v'c'_1, \ldots, v'c'_{d_v}$ to the graph.
The resultant Tanner graph is a NETS with $b=d_v-1$ in a variable-regular LDPC code with variable degree $d_v$ and girth at least $g$. (Note that $d(c'_1)= d_v+1$, which is even.)
}\end{proof}

\subsubsection{Upper bounds on $\mathcal{T}^{0}_{NETS}(d_v;g)$}

\begin{table}[ht]
\caption{Comparison of the upper bounds derived here with the lower bound of (\ref{ineq2}) on $\mathcal{T}^{0}_{NETS}(d_v;g)$}
\begin{center}
\scalebox{1}{
\begin{tabular}{ |l||c|c|c|c|c|c|   }
\hline
$\mathcal{T}^{0}_{NETS}(3;g)$  &  $g=$6  & $g=$8    & $g=$10    &$g=$12    & $g=$14     &$g=$16       \\
\hline
\hline
Lower bound of (\ref{ineq2})   &  8   &   12   &  18    &   28   &   40    &  60        \\
\hline
Derived upper bounds   &   8   &   12   &   18     &   28   &   40    &   60        \\
\hline
\multicolumn{7}{|l|}{$\mathcal{T}^{0}_{NETS}(4;g)$}       \\
\hline
Lower bound of (\ref{ineq2})   &  7   &   16   &  25    &   52   &   79    &  160        \\
\hline
Derived upper bounds   &   7   &   16   &   25     &   52   &   79    &   160        \\
\hline
\multicolumn{7}{|l|}{$\mathcal{T}^{0}_{NETS}(5;g)$}       \\
\hline
Lower bound of (\ref{ineq2})   &  8   &  20   &  36   &   84   &  148   &  340   \\
\hline
Derived upper bounds   &  12   &   20   &  60   &   84   &  304   &   340      \\
\hline
\multicolumn{7}{|l|}{$\mathcal{T}^{0}_{NETS}(6;g)$}       \\
\hline
Lower bound of (\ref{ineq2})   &  9   &  24   &  49   &   124   &  249   &  624   \\
\hline
Derived upper bounds   &  11   &  24   &  61   &   124   &  311   &   624      \\
\hline
\end{tabular}
}
\end{center}
\label{Table17E}
\end{table}

Fig.~\ref{F7} shows the construction used in Part $(i)(a)$ of Theorem~\ref{Th90} to derive the upper bound. The $(8,0)$ NETS in variable-regular LDPC codes with $d_v=3$ and $g=6$ is constructed starting from the $(3;3)$-cage shown in Fig.~\ref{F99}(a).

\begin{figure}[]
	\centering
	\includegraphics [width=0.3\textwidth]{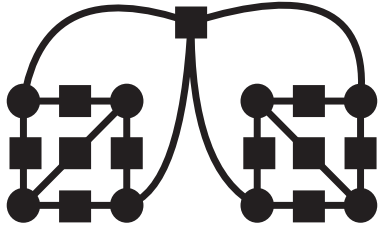}
	\caption{The $(8,0)$ NETS structure in variable-regular LDPC codes with $d_v=3$ and $g=6$, constructed from the $(3;3)$-cage shown in Fig.~\ref{F99}(a).}
	\label{F7}
\end{figure}

The upper bound derived in Part $(i)(a)$ of Theorem~\ref{Th90} appears to be tight for the cases where $g/2$ is an even number. In Table \ref{Table17E}, all the values for the cases where $g/2$ is even is derived based on this result.
As can be seen, in all cases, these upper bounds match the lower bound of (\ref{ineq2}).
For the cases where $\frac{g}{2}$ is an odd number, the upper bounds of Parts $(i)(b)$ and $(ii)$ (for $b=0$ and $d_v$ even) of Theorem~\ref{Th90} appear to be sometimes tighter than the upper bound of Part $(i)(a)$ of Theorem~\ref{Th90}. For the case of $d_v=3$, the following bound, whose proof is given in the appendix, is, in some cases, tighter than the bound derived in Part $(i)(b)$ of Theorem~\ref{Th90}.

\begin{fact}
Let $G$ be a $(3,4;\frac{g+2}{2})$-graph with $t \geq 1$ nodes of degree-$4$. Then, $\mathcal{T}^{0}_{NETS}(3;g)\leq |V(G)|+t-2$.
%, where $|V(G)|$ denotes the number of the nodes of the graph $G$.
\label{thm64}
\end{fact}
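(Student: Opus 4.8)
The plan is to follow the method of Theorem~\ref{Th90}, starting from the bipartite graph $\widetilde{G^{\frac{1}{2}}}$. Since $G$ has girth $\frac{g+2}{2}$, its subdivision has girth $g+2$, so $\widetilde{G^{\frac{1}{2}}}$ has girth $g+2$; its variable nodes are the nodes of $G$ (hence $t$ of them have degree $4$ and the rest degree $3$), and all of its check nodes have degree $2$. Thus $\widetilde{G^{\frac{1}{2}}}$ is ``almost'' a LETS for $d_v=3$ and girth $g$: the only defects are the $t$ variable nodes of degree $4$, and the fact that an all-degree-$2$ structure would be elementary rather than a NETS. I would repair the degree-$4$ nodes and, at one of them, create a single even check node of degree exceeding $2$, keeping track of the variable-node count so that it equals $|V(G)|+t-2$.

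First I would perform a merging operation at one degree-$4$ node. Pick a node $v$ of $G$ of degree $4$; in $\widetilde{G^{\frac{1}{2}}}$ it is a variable node adjacent to four degree-$2$ check nodes $c_1,c_2,c_3,c_4$ whose other endpoints are four distinct variable nodes $u_1,u_2,u_3,u_4$ (distinct since $G$ is simple). Delete $v$ and $c_1,\dots,c_4$ with their incident edges, and add one new check node $c^\ast$ joined to $u_1,u_2,u_3,u_4$. Each $u_i$ loses exactly one neighbor ($c_i$) and gains exactly one ($c^\ast$), so its degree is unchanged; $c^\ast$ has (even) degree $4$ and every remaining check has degree $2$, so the resulting structure has $b=0$ and is non-elementary. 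This step removes one variable node and creates none.

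Next I would split each of the $t-1$ remaining degree-$4$ variable nodes. If $w$ is such a node with degree-$2$ check neighbors $a_1,a_2,a_3,a_4$ (taking $a_1=c^\ast$ in case $w$ is one of the $u_i$), introduce a new variable node $w^\ast$ and a new degree-$2$ check node $c_w^\ast$, replace the edges $wa_3, wa_4$ by $w^\ast a_3, w^\ast a_4$, and add $wc_w^\ast$ and $w^\ast c_w^\ast$. Now $w$ and $w^\ast$ both have degree $3$, all check degrees are unchanged, and $b$ stays $0$; the operation adds exactly one variable node. After all $t-1$ splits the structure has $|V(G)|-1+(t-1)=|V(G)|+t-2$ variable nodes, every variable node has degree $3$, there is one check node of degree $4$ and every other check has degree $2$, and $b=0$; hence, once we check its girth is at least $g$, it is a $(|V(G)|+t-2,0)$ NETS structure that can be completed to a variable-regular Tanner graph with $d_v=3$ and girth at least $g$, which gives the claimed bound.

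The main obstacle is the girth verification, which I would carry out operation by operation, exploiting the one unit of slack (we start from girth $g+2$ and need only $g$). After the merging step, any cycle avoiding $c^\ast$ is a cycle of a subgraph of $\widetilde{G^{\frac{1}{2}}}$ and hence has length $\geq g+2$; any cycle through $c^\ast$ contains a sub-path $u_i\,c^\ast\,u_j$ of length $2$, which can be replaced by the length-$4$ path $u_i\,c_i\,v\,c_j\,u_j$ of $\widetilde{G^{\frac{1}{2}}}$ to obtain a closed walk of $\widetilde{G^{\frac{1}{2}}}$ of length two more than the cycle; since $v$ is traversed exactly once, this walk contains a cycle, so the original cycle has length $\geq (g+2)-2=g$. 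For a split at $w$, any cycle through the new check $c_w^\ast$ contains the sub-path $w\,c_w^\ast\,w^\ast$; identifying $w^\ast$ with $w$ turns the cycle into a strictly shorter, non-trivial closed walk of the graph preceding the split, so the cycle is at least two longer than that graph's girth, while cycles avoiding $c_w^\ast$ are unaffected; thus each split preserves the property ``girth $\geq g$''. The remaining bookkeeping --- distinctness of $u_1,\dots,u_4$, consistency of the merging and splitting steps when some $u_i$ is itself a degree-$4$ node, and preservation of connectedness --- is routine.
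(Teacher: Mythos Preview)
Your proof is correct and follows essentially the same construction as the paper: merge one degree-$4$ node into a single degree-$4$ check (producing the non-elementary check and losing one variable node), and split each of the remaining $t-1$ degree-$4$ nodes into two degree-$3$ variable nodes joined by a new degree-$2$ check (gaining $t-1$ variable nodes). The only cosmetic difference is ordering: the paper performs the splits first at the level of $G$ (before subdividing) and the merge last, whereas you subdivide first, merge, and then split inside the Tanner graph; your explicit girth verification is in fact more detailed than the paper's.
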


Since $n'(3,4;5)=13$, using Fact~\ref{thm64}, we obtain the upper bound of $\mathcal{T}^{0}_{NETS}(3;8)\leq  12$, which is the same as the upper bound obtained from Part $(i)(a)$ of Theorem~\ref{Th90}.  For $\mathcal{T}^{0}_{NETS}(3;10)$, however, knowing that there is a $(3,4;6)$-graph with $18$ nodes, two of which have degree $4$~\cite{MR642638}, we have $\mathcal{T}^{0}_{NETS}(3;10)\leq 18$, which is tighter than the upper bound of $20$ from Part $(i)(a)$ of Theorem~\ref{Th90}. Also, there is a $(3,4;8)$-graph with $39$ nodes, three of which have degree 4~\cite{MR3127005}. This, based on Fact~\ref{thm64}, results in $\mathcal{T}^{0}_{NETS}(3;14)\leq 40$.

In Table \ref{Table17E}, we compare the upper bounds derived here for $\mathcal{T}^{0}_{NETS}(3;g)$ with the lower bounds of (\ref{ineq2}). As can be seen, all the derived upper bounds match the lower bounds and are thus tight.
(Note that for $\mathcal{T}^{0}_{NETS}(3;14)$, by Theorem \ref{lowf}, we obtain the lower bound of $39$, but since there is no Tanner graph with $39$ variable nodes such that the degree of each variable node is $3$ and the degree of each check node is even, the bound can be improved to $40$.)

For the case of $d_v=4$, the tightest bound (for odd values of $g/2$) is the one derived in Part $(ii)$ of Theorem~\ref{Th90}. In Table \ref{Table17E}, we have listed this upper bounds (for odd values of $g/2$) along with the lower bounds of (\ref{ineq2}). As can be seen, they all match.

In the case of $\mathcal{T}^{0}_{NETS}(5;g)$, the upper bounds listed in Table \ref{Table17E} are obtained by Part $(i)(a)$ of Theorem \ref{Th90}. As can be seen, while for $g=8, 12, 16$, the bounds are tight and meet the lower bounds, for the cases where $g/2$ is an odd number, there is a large gap between the two bounds.

For $d_v=6$, the tightest bound (for odd values of $g/2$) is the one derived in Part $(ii)$ of Theorem~\ref{Th90}.
This upper bound is listed for $\mathcal{T}^{0}_{NETS}(6;g)$ in Table \ref{Table17E} (for odd values of $g/2$).

\subsubsection{Upper bounds on $\mathcal{T}^{1}_{NETS}(d_v;g)$}

In Table \ref{Table18}, we have listed the upper bounds derived in Part $(ii)$ of Theorem~\ref{Th90} for $b=1$ and $d_v=3, 5$, in comparison with the lower bounds of Theorem~\ref{lowf}. Table  \ref{Table18} shows that while the two match for $\mathcal{T}^{1}_{NETS}(3;g)$ and some $g$ values, there is a gap between the two bounds for $\mathcal{T}^{1}_{NETS}(5;g)$.

\begin{table}[ht]
\caption{Comparison of the upper bounds of Theorem~\ref{Th90} with the lower bounds of Theorem~\ref{lowf} on $\mathcal{T}^{1}_{NETS}(d_v;g)$}
\begin{center}
\scalebox{1}{
\begin{tabular}{ |l||c|c|c|c|c|c|   }
\hline
$\mathcal{T}^{1}_{NETS}(3;g)$  &  $g=$6  & $g=$8    & $g=$10    &$g=$12    & $g=$14     &$g=$16       \\
\hline
\hline
Lower bound of Theorem~\ref{lowf}   &  5   &   9   &  13    &   21   &   29    &  45        \\
\hline
Upper bound of Theorem \ref{Th90}   &   5   &   9   &   13     &   23   &   29    &   57        \\
\hline
\multicolumn{7}{|l|}{$\mathcal{T}^{1}_{NETS}(5;g)$}       \\
\hline
Lower bound of Theorem~\ref{lowf}   &  7   &  15   &  31   &   63   &  127   &  255   \\
\hline
Upper bound of Theorem \ref{Th90}   &  9   &  29   &  41   &   151   &  169   &   --      \\
\hline
\end{tabular}
}
\end{center}
\label{Table18}
\end{table}

\subsubsection{Upper bounds on $\mathcal{T}^{2}_{NETS}(d_v;g)$}

To derive the upper bounds for $d_v = 3, 4$, and $6$, we use Parts $(iii)(d)$, $(ii)$, and $(ii)$ of Theorem~\ref{Th90}, respectively. For $d_v=5$, we use the follwing result whose proof is given in the appendix.

\begin{fact}
For any $g \geq 6$,  we have $\mathcal{T}^{2}_{NETS}(5;g)\leq 2n(5;\frac{g}{2})$.
\label{thmxx}
\end{fact}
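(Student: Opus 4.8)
The plan is to reuse the two-copy construction behind Part~$(i)(a)$ of Theorem~\ref{Th90}, but to glue the two copies together in a slightly different way so that the resulting graph becomes non-elementary while carrying exactly two unsatisfied check nodes. First I would let $G$ be a $(5;\frac{g}{2})$-cage, so that $|V(G)|=n(5;\frac{g}{2})$, and build the bipartite graph $\widetilde{G^{\frac{1}{2}}}$; this is a variable-regular graph of girth $g$ with variable degree $5$ in which \emph{every} check node has degree~$2$. Take two vertex-disjoint copies $G_1$ and $G_2$ of $\widetilde{G^{\frac{1}{2}}}$; together they contain exactly $2n(5;\frac{g}{2})$ variable nodes, and no variable node will be added or removed in what follows.

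Next I would perform the following local surgery. In $G_1$ pick any check node $c$ (it has degree~$2$), with variable neighbours $v_1,w_1$, and delete $c$ together with its two incident edges, so that $v_1$ and $w_1$ now have degree~$4$. Do the same in $G_2$: delete a degree-$2$ check node $c'$ with variable neighbours $v_2,w_2$, leaving $v_2,w_2$ of degree~$4$. Now add a new check node $c^{\ast}$ joined to $v_1$, $w_1$ and $v_2$, and a new check node $c^{\ast\ast}$ joined only to $w_2$. A degree count shows that every variable node is back to degree~$5$, that $c^{\ast}$ has degree~$3$ and $c^{\ast\ast}$ has degree~$1$, while all remaining check nodes still have degree~$2$. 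Hence the resulting graph is non-elementary (because of $c^{\ast}$), it has exactly $b=2$ odd-degree check nodes (namely $c^{\ast}$ and $c^{\ast\ast}$), and it has exactly $2n(5;\frac{g}{2})$ variable nodes; the value $b=2$ is consistent with Lemma~\ref{lem:cannot} since $2n(5;\frac{g}{2})$ is even.

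It then remains to verify that the girth is at least $g$. Inside $G_1$, deleting the degree-$2$ check $c$ with neighbours $v_1,w_1$ and re-attaching the new check $c^{\ast}$ to the same two variables does not change the cycle structure of $G_1$, so $G_1$ keeps girth at least $g$; likewise, deleting $c'$ in $G_2$ can only lengthen cycles, so that copy also keeps girth at least $g$. The only potentially new cycles are those using an edge between the two copies, and no such edge lies on a cycle: $c^{\ast\ast}$ has degree~$1$, and removing the edge $\{c^{\ast},v_2\}$ separates the modified $G_2$ (together with $c^{\ast\ast}$) from the rest of the graph, so $\{c^{\ast},v_2\}$ is a bridge; hence any cycle through $c^{\ast}$ must use its two $G_1$-edges and stays inside $G_1$. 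Therefore every cycle lies within a single modified copy and has length at least $g$. Finally, since all variable nodes have degree~$5$ and the girth is at least $g$, this bipartite graph can be completed to a variable-regular Tanner graph with $d_v=5$ and girth at least $g$ in which it is the subgraph induced by the chosen $2n(5;\frac{g}{2})$ variable nodes, which yields a $\bigl(2n(5;\frac{g}{2}),2\bigr)$ NETS and hence $\mathcal{T}^{2}_{NETS}(5;g)\le 2n(5;\frac{g}{2})$.

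The step I expect to be the main obstacle is the girth verification when the two copies are joined; the construction is engineered precisely so that the $G_1$-side edges and the $G_2$-side edge of the degree-$3$ check $c^{\ast}$ are separated by a bridge, which forces every cycle to remain inside one essentially unmodified copy of $\widetilde{G^{\frac{1}{2}}}$. The remaining work is the routine bookkeeping of check-node degrees after the deletions and additions, and confirming that the object can legitimately appear as an induced subgraph of a $d_v=5$, girth-$g$ Tanner graph, exactly as in the proofs of Theorems~\ref{Th92} and~\ref{Th90}.
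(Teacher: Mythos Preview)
Your proof is correct and follows essentially the same two-copy strategy as the paper. The only difference is cosmetic: the paper deletes the same check node $c$ from both copies of $\widetilde{G^{\frac{1}{2}}}$ and joins the four loose variable nodes through a single degree-$4$ (hence satisfied) check, then splits an unrelated degree-$2$ check into two degree-$1$ checks to obtain $b=2$; you instead attach only three of the four loose variable nodes to a degree-$3$ check $c^{\ast}$ (which is simultaneously non-elementary and unsatisfied) and hang the fourth on a degree-$1$ check. Your bridge argument for the girth is arguably cleaner than the implicit symmetry argument needed for the paper's degree-$4$ join, but the underlying idea---two copies of the cage-based Tanner graph glued through a high-degree check---is the same.
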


In Table \ref{TableN99}, we compare the upper bounds derived here on $\mathcal{T}^{2}_{NETS}(3;g)$ with the corresponding lower bound of Theorem~\ref{lowf}. For the other values of $d_v$, the gap between the two bounds is rather  large, and thus the bounds are not listed in the table.

\begin{table}[ht]
\caption{Comparison of the upper bound of Theorem~\ref{Th90} and the lower bound of Theorem~\ref{lowf} for $\mathcal{T}^{2}_{NETS}(3;g)$}
\begin{center}
\scalebox{1}{
\begin{tabular}{ |l||c|c|c|c|c|c|   }
\hline
$\mathcal{T}^{2}_{NETS}(3;g)$  &  $g=$6  & $g=$8    & $g=$10    &$g=$12    & $g=$14     &$g=$16       \\
\hline
\hline
Lower bound of Theorem~\ref{lowf}  &  6   &   8   &  12    &   18   &   26    &  38        \\
\hline
Upper bound of Theorem~\ref{Th90}   &   6   &   10   &   14     &   24   &   30    &   58        \\
\hline
\end{tabular}
}
\end{center}
\label{TableN99}
\end{table}

\subsubsection{Upper bounds on $\mathcal{T}^{3}_{NETS}(d_v;g)$}

To derive the bounds for $d_v=3$ and $d_v=5$, we use Parts $(iii)(b)$ and $(ii)$ of Theorem~\ref{Th90}, respectively.
Table \ref{Table19} shows these upper bounds in comparison with the lower bounds of Theorem~\ref{lowf}. While the derived bounds match the lower bounds for $\mathcal{T}^{3}_{NETS}(3;g)$ and all $g$ values, except $g=14$, there is a gap between the upper and lower bounds for $\mathcal{T}^{3}_{NETS}(5;g)$.

\begin{table}[ht]
\caption{Comparison of the upper bounds of Theorem~\ref{Th90} and the lower bound of Theorem~\ref{lowf} for $\mathcal{T}^{3}_{NETS}(d_v;g)$}
\begin{center}
\scalebox{1}{
\begin{tabular}{ |l||c|c|c|c|c|c|   }
\hline
$\mathcal{T}^{3}_{NETS}(3;g)$  &  $g=$6  & $g=$8    & $g=$10    &$g=$12    & $g=$14     &$g=$16       \\
\hline
\hline
Lower bound  of Theorem~\ref{lowf}   &  5   &   7   &  11    &   15   &   23    &  31        \\
\hline
Upper bound of Theorem~\ref{Th90}   &   5   &   7   &   11     &   15   &   25    &   31        \\
\hline
\multicolumn{7}{|l|}{$\mathcal{T}^{3}_{NETS}(5;g)$}       \\
\hline
Lower bound  of Theorem~\ref{lowf}   &  7   &  13   &  29   &   53   &  117   &  213   \\
\hline
Upper bound of Theorem~\ref{Th90}    &  9   &  29   &  41   &   151   &  169   &   --      \\
\hline
\end{tabular}
}
\end{center}
\label{Table19}
\end{table}

\subsubsection{Upper bounds on $\mathcal{T}^{4}_{NETS}(d_v;g)$}

To derive the bounds for $d_v = 3, 4, 5$, and $6$, we use Parts $(iii)(a)$, $(iii)(b)$, $(iii)(d)$, and $(ii)$ of Theorem~\ref{Th90}. In Table \ref{Table20}, we compare these upper bounds with the lower bounds of Theorem \ref{lowf} for $\mathcal{T}^{4}_{NETS}(d_v;g)$, where $d_v=3,4$. For larger values of $d_v$,  the difference between the two bounds is large and thus the bounds are not presented in the table.

\begin{table}[ht]
\caption{Comparison of the upper bounds of Theorem~\ref{Th90} and the lower bounds of Theorem~\ref{lowf} for $\mathcal{T}^{4}_{NETS}(d_v;g)$}
\begin{center}
\scalebox{1}{
\begin{tabular}{ |l||c|c|c|c|c|c|   }
\hline
$\mathcal{T}^{4}_{NETS}(3;g)$  &  $g=$6  & $g=$8    & $g=$10    &$g=$12    & $g=$14     &$g=$16       \\
\hline
\hline
Lower bound of Theorem~\ref{lowf}   &  4   &   6   &  8    &   12   &   16    &  24        \\
\hline
Upper bound of Theorem~\ref{Th90}   &   6   &   8   &   12     &   16   &   26    &   32        \\
\hline
\multicolumn{7}{|l|}{$\mathcal{T}^{4}_{NETS}(4;g)$}       \\
\hline
Lower bound of Theorem~\ref{lowf}   &  5   &   9   &  15    &   27   &   45    &  81        \\
\hline
Upper bound of Theorem~\ref{Th90}   &   6   &   9   &   20     &   27   &   68    &   81        \\
\hline
\end{tabular}
}
\end{center}
\label{Table20}
\end{table}

\subsubsection{Upper bounds on $\mathcal{T}^{5}_{NETS}(d_v;g)$}

To derive the bounds for $d_v=3$ and $5$, we use Parts $(iii)(c)$ and $(iii)(b)$ of Theorem~\ref{Th90}, respectively.
These upper bounds are compared with the existing lower bound of Theorem \ref{lowf} in Table \ref{Table21}.

\begin{table}[ht]
\caption{Comparison of the upper bounds of Theorem~\ref{Th90} and the lower bounds of Theorem~\ref{lowf} for $\mathcal{T}^{5}_{NETS}(d_v;g)$}
\begin{center}
\scalebox{1}{
\begin{tabular}{ |l||c|c|c|c|c|c|   }
\hline
$\mathcal{T}^{5}_{NETS}(3;g)$  &  $g=$6  & $g=$8    & $g=$10    &$g=$12    & $g=$14     &$g=$16       \\
\hline
\hline
Lower bound of Theorem~\ref{lowf}   &  5   &   5   &  7    &   9   &   13    &  17        \\
\hline
Upper bound of Theorem~\ref{Th90}    &   5   &   7   &   11     &   15   &   25    &   31        \\
\hline
\multicolumn{7}{|l|}{$\mathcal{T}^{5}_{NETS}(5;g)$}       \\
\hline
Lower bound of Theorem~\ref{lowf}   &  7   &   11   &  23    &   43   &   91    &  171        \\
\hline
Upper bound of Theorem~\ref{Th90}   &   7   &   11   &   31     &   43   &   153    &   171        \\
\hline
\end{tabular}
}
\end{center}
\label{Table21}
\end{table}

\section{Conclusion}
\label{sec6}
In this paper, we established relationships between ``cages'' in graph theory, and ``trapping sets'' and ``codewords'' in coding theory. Based on these relationships, we used the extensive knowledge of cages in graph theory and
devised tight upper bounds on the size $a$ of the smallest $(a,b)$ LETSs, ETSLs, and NETSs, for different values of $b$. This was performed for variable-regular Tanner graphs with different variable degrees $d_v$ and girths $g$.
Of particular interest was the results on LETSs and NETSs for $b=0$, that correspond to upper bounds on the minimum weight of non-zero elementary and non-elementary codewords, respectively. For elementary codewords,
the derived bounds were proved to be always tight, i.e., they are equal to the exact minimum weight of non-zero elementary codewords. For non-elementary codewords, the derived upper bounds were tight for many values of $d_v$ and $g$, as they were shown to be equal to an existing lower bound. 
%While the results presented in this work were limited to $3 \leq d_v \leq 6$, and $6 \leq g \leq 16$, the same techniques can be used to cover larger values of $d_v$ and $g$.

Finally, the connections between cages and trapping sets are helpful not only in the derivation of upper bounds on the size of smallest trapping sets, but also in learning about the non-isomorphic structures of such trapping sets. Since
%the results presented here identify the
trapping set structures of minimum size are often the most harmful ones, the results presented here may be used to analyze the error floor of LDPC codes or to design codes with low error floors.

\section{Appendix}

{\bf Proof of Fact~\ref{F1}:} Let $G$ be a $(5,6;\frac{g}{2})$-good-graph.\footnote{It can be shown that for any given value of $g \geq 6$, a  $(5,6;\frac{g}{2})$-good-graph exists. More generally, in the following, we show that for any integer $r \geq 2$, there always exists a $(2r-1,2r;g/2)$-good-graph, for any value of $g \geq 6$.  To see this, let $G$ be a $(2r-1;g/2)$-cage, and $e=xy$ be an arbitrary edge in $G$. Remove $e$ from $G$ and call the resulted graph $G'$.
Consider a simple cycle ${\cal C}$ of length $g/2$ with the set of nodes $v_1, v_2, \ldots, v_{g/2}$. For each value of $i$ in the range $2 \leq i \leq g/2$, add a node $u_i$ to ${\cal C}$, and connect it to the node $v_i$. 
For each $i$, $2 \leq i \leq g/2$, make $r-2$ copies of $G'$. Then, connect the node $v_i$ to both $x$ and $y$ in all copies of $G'$. Next, for each $i$, $2 \leq i \leq g/2$, make $r-1$ copies of $G'$, and connect the node $u_i$ to both $x$ and $y$ in all copies of $G'$. Finally, make $r-1$ copies of $G'$ and connect the node $v_1$, to both $x$ and $y$ in all copies of $G'$. It is easy to see that in the resultant graph all nodes have degree $2r-1$, except for the node $v_1$ that has degree $2r$, and that the girth of the graph is $g/2$.} Consider the Tanner graph $\widetilde{G^{\frac{1}{2}}}$. Let $v$ be the variable node of degree six and $c$ be a check node in $\widetilde{G^{\frac{1}{2}}}$ such that $vc \in E(\widetilde{G^{\frac{1}{2}}})$. Remove the edge $vc$ from $\widetilde{G^{\frac{1}{2}}}$. The resultant Tanner graph is a LETS with one unsatisfied check node in a variable-regular LDPC code with $d_v=5$ and girth at least $g$. \hfill
$ \blacksquare$

{\bf Proof of Fact~\ref{thm64}:} Let $G$ be a $(3,4;\frac{g+2}{2})$-graph. Assume that $G$ has $t \geq 1$ nodes of degree-4. For each degree-4 node (except one of them, say $w$), perform the following procedure: Let $v$ be the node of degree $4$ and $u_1,u_2,u_3,u_4$ be its neighbors. Remove the node $v$ and its incident edges from $G$, and then add two nodes $v_1,v_2$ and the edges $v_1v_2, v_1u_1,v_1u_2,v_2u_3,v_2u_4$ to the graph. By having done this procedure for all  degree-$4$ nodes except $w$, we obtain a graph such that the degree of each node except $w$ is three and the girth is at least $\frac{g+2}{2}$. Call this graph $H$. Consider the Tanner graph $\widetilde{H^{\frac{1}{2}}}$.
Assume that $\{c_1,c_2,c_3,c_4\}$ is the set of neighbors of $w$ and for each $i$, $1\leq i \leq 4$, there is variable node $w_i $, such that $w_ic_i \in E(\widetilde{H^{\frac{1}{2}}})$.
Remove the nodes $w,c_1,c_2,c_3,c_4$ and their incident edges form the graph. Next, add a check node $c'$ and the edges $c'w_1,c'w_2,c'w_3,c'w_4$ to the graph.
The resultant graph is a NETS structure with $b=0$ in a variable-regular LDPC code with $d_v=3$ and girth at least $g$. \hfill $ \blacksquare$

{\bf Proof of Fact~\ref{thmxx}:} Let $G$ be a $(5;\frac{g }{2})$-graph. Consider the Tanner graph $\widetilde{G^{\frac{1}{2}}}$. Let $v_1,v_2$ be two variable nodes in $\widetilde{G^{\frac{1}{2}}}$, and  $c $ be  a check node such that $cv_1,cv_2 \in E(\widetilde{G^{\frac{1}{2}}})$. Consider the union of two copies of the graph  $\widetilde{G^{\frac{1}{2}}}$. Remove the node $c$ and its incident edges from both copies of  $\widetilde{G^{\frac{1}{2}}}$. Then, add a check node $c'$ and connect it to variable nodes  $v_1,v_2$ in the first copy and the second copy of $\widetilde{G^{\frac{1}{2}}}$. In the resultant graph, let $z$ be a degree-$2$ check node with the set of neighbors $\{u_1,u_2\}$. Remove $z$ and its incident edges from the graph. Add two check nodes $z',z''$ and the edges $u_1z',u_2z''$ to the graph. The resultant Tanner graph is a NETS  with $b=2$  in a variable-regular LDPC code with $d_v=5$ and girth at least $g$. \hfill $\blacksquare$

\bibliographystyle{ieeetr}

\end{document}